\institute{Laboratory of Informatics of Grenoble (CNRS, Grenoble INP)\\
    Bâtiment IMAG C - 220 rue de la Chimie 38400 Saint Martin d’Hères\\
    \{vincent.aravantinos,ricardo.caferra,nicolas.peltier\}@imag.fr}\spnewtheorem{definition2}[theorem]{Definition} {\bfseries} {\itshape}\spnewtheorem{remark2}[theorem]{Remark} {\itshape} {\rmfamily}\spnewtheorem{example2}[theorem]{Example} {\itshape} {\rmfamily}\spnewtheorem{lemma2}[theorem]{Lemma} {\bfseries} {\itshape}\spnewtheorem{proposition2}[theorem]{Proposition} {\bfseries} {\itshape}\spnewtheorem{corollary2}[theorem]{Corollary} {\bfseries} {\itshape}\toctitle{Linear Temporal Logic and Propositional Schemata, Back and Forth}\titlerunning{Linear Temporal Logic and Propositional Schemata, Back and Forth}\authorrunning{Vincent Aravantinos, Ricardo Caferra, Nicolas Peltier}
\title{Linear Temporal Logic and Propositional Schemata, Back and Forth\thanks{This work has been partly funded by the project ASAP of the French \emph{Agence Nationale de la Recherche} (ANR-09-BLAN-04-07-01)} (extended version)}
\author{Vincent Aravantinos, Ricardo Caferra, Nicolas Peltier}
\begin{document}
  {\maketitle}
  \begin{abstract}
    This paper relates the well-known formalism of Linear Temporal Logic \cite{ltl}
      with the logic of \emph{propositional schemata} introduced in \cite{tab09}.
      We prove that LTL is equivalent to a particular class of schemata in the  sense that polynomial-time 
      translation algorithms exist from one logic to the other.
      Some consequences about complexity are given.
      We report about first experiments and the consequences about possible improvements in existing implementations are analyzed.
      \end{abstract}
  \section{Introduction}Linear Temporal Logic (LTL) is a very well-known logic introduced in \cite{ltl} for verifying computer programs.
      It is widely used to reason on finite state transition systems.
      On the other hand, \emph{propositional schemata} have been introduced in \cite{tab09}.
      They extend the language of propositional logic with \emph{indexed propositions} (such as $p_{\mathsf{n}}$, $p_{1}$ or $p_{\mathsf{i}+1}$)
      and \emph{iterated connectives} of the form $\bigvee _{\mathsf{i}=0}^{\mathsf{n}}\phi $ or $\bigwedge _{\mathsf{i}=0}^{\mathsf{n}}\phi $.
      Notice that $\mathsf{n}$ denotes a parameter, which must be interpreted as a natural number.
      If arbitrary expressions for indices and iterations are allowed in the schema,
      then the satisfiability problem is undecidable, but we have identified in \cite{tab09,nested,jair2011}
      some subclasses for which this problem is decidable.
      The simplest of these classes is called \emph{regular}:
      it is defined by restricting both the indices of the propositions,
      that must be of the form $k$ or $\mathsf{n}+k$ where $k\in \mathbb{Z}$ and $\mathsf{n}$ is a variable,
      and iterations, that must be non-nested and of the form $\bigwedge _{\mathsf{i}=k}^{\mathsf{n}+l}\phi $ where $\mathsf{n}$ is a variable and $k,l\in \mathbb{Z}$.
      Decision procedures are designed in \cite{tab09,nested} and an implementation is available \cite{regstab}.\par     LTL and propositional schemata share many common features and
      trying to compare them precisely is a rather natural and, hopefully, fruitful idea.
      In both logics, interpretations can be viewed as arrays of propositional functions and the formulae relate the
      values of these functions at different states.
      The \emph{indices} of the propositions in the schematic case may be viewed as the \emph{time} in LTL.
      Thus comparing the expressing powers and complexities of those two logics,
      and, if possible, defining translations from one logic to the other is a natural and potentially rewarding issue.
      Notice that there already exist several results relating LTL to other formalisms like 
      monadic second order logic via Büchi automata \cite{ltl-buchi},
      monadic first order logic over natural numbers \cite{temporal-completeness} or star-free regular languages \cite{star-free}.
      However, there is a fundamental difference between these languages and the logic of schemata:
      they deal with infinite objects (infinite interpretations in the case of LTL or first order logic over natural numbers,
      infinite words in the case of star-free regular languages), whereas schemata deal with intrinsically finite (but unbounded) interpretations.
      This subtle but important difference introduces  difficulties in the definition of such translations.
      This topic bears some similarities with the approach of \cite{nivat} where problems on Bücchi automata are reduced to problems on finite automata
      by using the ultimately periodic property of ${\omega}$-regular languages.

  Note that finite interpretation is sometimes a desired feature:
      restricting LTL to finite traces has been considered in \cite{truncated-ltl},
      and has applications in, e.g., planning or runtime verification \cite{finite-ltl-original,ltl-finite-traces,finite-ltl-review}.
      It can be argued that the use of LTL in such contexts is a bit overkilling.
      Indeed, often, rather than considering finite traces per se, the
      preferred approach is to turn them into infinite traces by infinitely repeating the last state.
      It seems to us that it would be more natural to use schemata for such applications.
      In the present work, it is shown that doing so entails no loss in expressive power.\par     In the present paper, we show that LTL is equivalent to a particular subclass of regular schemata, referred to as \emph{sequential}.
      More precisely, we define functions effectively translating formulae from one
      logic into the other and show that this transformation preserves satisfiability.
      We believe that these results are interesting from a theoretical perspective since they
      provide useful information about the expressive power of the respective formalisms.
      Furthermore they allow to import the complexity results of LTL into schemata.
      From a practical point of view, the existence of a polynomial reduction from a class of
      propositional schemata into LTL allows one to benefit from the many existing efficient decision procedures for this logic
      (tableaux methods, e.g. \cite{wolper,one-pass},
      resolution-based methods, e.g. \cite{ltl-resolution},
      or reductions to model checking, e.g. \cite{ltl-reduction-experiments,ltl-antichains}),
      implementations \cite{logic-workbench,trp,nusmv,ltl-antichains}
      and experimentation tools \cite{lotrec}.
      Conversely, the reverse reduction might give further ideas for the design of new techniques to decide LTL satisfiability.
      In particular, since a \textsc{Dpll}-based procedure exists for regular schemata \cite{nested}, it might help to design such a procedure for LTL.
      On another hand, this reduction is very reminiscent of the translation from LTL
      to propositional logic encountered in bounded model checking (BMC) \cite{bmc}.
      Contrarily to BMC however, our reduction is \emph{complete}, it might thus give new ideas to achieve completeness in BMC.
      \par     The paper is structured as follows.
      In Section \ref{latex_lib_label_64} we define LTL and the logic of propositional schemata.
      In Section \ref{latex_lib_label_65} we show how to relate the interpretations of both formalisms.
      A polynomial algorithm transforming any sequential schema into an equivalent LTL formula is presented in Section \ref{latex_lib_label_27},
      and Section \ref{latex_lib_label_66} tackles the reverse translation, i.e. from LTL formulae to schemata.
      Section \ref{latex_lib_label_76} presents the results about first experiments with those translations
      and sketches the possible improvements inspired by those experiments.
      Section \ref{latex_lib_label_69} presents pros and cons of each logic, and make a very informal comparison of how LTL procedures
      behave on schema modulo the given translation, and, conversely, how schemata procedures behave on LTL formulae.
      Of course, with the given translations and the usual reduction of model checking to satisfiability, one can do model checking with schemata,
      Section \ref{latex_lib_label_75} gives an example of such model checking.
      Finally, Section \ref{latex_lib_label_67} briefly concludes our work.\par     
      \section{Definitions and notations}\label{latex_lib_label_64}In  the  following,  
      $\phi , \phi _{1}, \phi _{2}$  denote  LTL formulae,
      $s, s_{1}, s_{2}$  denote  schemata,
      ${\sigma}$  denotes  an  LTL or propositional interpretation,
      $\mathfrak{I}, \mathfrak{M}$ denote schema interpretations,
      $e, f, g$ denote (Presburger)  arithmetic expressions,
      $\mathsf{n}, \mathsf{i}$ denote arithmetic variables
      ($\mathsf{n}$ will be used for a free  arithmetic  variable (``parameter'') and $\mathsf{i}$ for a bound one).
      Remark  that $\mathsf{n}, \mathsf{i}$  are  written in sans serif 
      in order to distinguish them from meta variables denoting natural numbers,
      that will be written $n,i$.

  Both LTL and schemata have propositional logic as a common basis.
          Furthermore, in both languages, propositional variables are accompanied with a natural number
      (an instant in the case of LTL, an index for schemata).
      So instead of defining, as in classical propositional logic,
      an interpretation as a function mapping each propositional variable to a truth value,
      we rather define interpretations as functions mapping \emph{pairs} of propositional variables \emph{and natural numbers} to truth values.
      Formally:
  \begin{definition2}
    \label{latex_lib_label_50}Let $\mathcal{P}$ be a set of \emph{propositional variables}.
        A \emph{propositional interpretation} over $\mathcal{P}$ is a function from $\mathcal{P}\times \mathbb{N}$ to $\{ \mathrm{true}, \mathrm{false}\} $.
  \end{definition2}
  \begin{example2}
    Let $\mathcal{P}=\{ p, q\} $. Then ${\sigma}$ s.t. 
        $\sigma ( p, 0) =\mathrm{true}$,
        $\sigma ( q, 0) =\mathrm{false}$,
        $\sigma ( p, 1) =\mathrm{false}$,
        $\sigma ( q, 1) =\mathrm{false}$,
        $\sigma ( p, 2) =\mathrm{true}$,
        $\sigma ( q, 2) =\mathrm{true}$,
        and, for any $k>2$:
        $\sigma ( p, k) =\mathrm{true}$ and $\sigma ( q, k) =\mathrm{false}$,
        is a propositional interpretation.
  \end{example2}
   An interpretation ${\sigma}$ is represented by the set of all pairs (variable, natural number) that are true in ${\sigma}$.
          Most of the time we do not need to make that set explicit.
      For instance, when interpreting a given formula ${\phi}$, it will be implicitly assumed 
      that we consider only interpretations over sets that contain the variables of ${\phi}$.\subsection{LTL}
        We now recall the syntax and semantics of LTL.
  \begin{definition2}
          The syntax of LTL formulae over the set of propositional variables $\mathcal{P}$ is given by the following grammar:
          \begin{displaymath}
      \phi ::=\top \thinspace |\thinspace \mathcal{P}\thinspace |\thinspace \neg \phi \thinspace |\thinspace \phi \land \phi \thinspace |\thinspace \mathrm{X}\phi \thinspace |\thinspace \phi \mathrm{U}\phi 
    \end{displaymath}
  \end{definition2}

  $\mathrm{X}\phi $ means that ${\phi}$ holds at the next instant (``$\mathrm{X}$'' for neXt).
        $\phi \mathrm{U}\psi $ means that ${\phi}$ holds until ${\psi}$ holds (``$\mathrm{U}$'' for Until).
        We will also use the following abbreviations: $\mathrm{F}\phi \stackrel{{\mbox{\tiny{def}}}}{=}\top \mathrm{U}\phi $ and $\mathrm{G}\phi \stackrel{{\mbox{\tiny{def}}}}{=}\neg \mathrm{F}\neg \phi $,
        meaning respectively ``${\phi}$ eventually holds'' and ``${\phi}$ always holds''.
        The abbreviations $\lor $, $\Rightarrow $ and $\Leftrightarrow $ are defined as usual
        (the naive elimination of $\Leftrightarrow $ is exponential but it can be made linear by using 
        renaming of subformulae as usual, which preserves satisfiability).
        See \cite{ltl} for details.

  LTL formulae are usually interpreted over infinite paths in a transition system,
        together with a labelling that maps every state to a set of propositional variables.
        Such sequences are often called \emph{computations} or \emph{behaviours}.
        We will simply call them \emph{LTL interpretations}.
        For uniformity, we define formally an LTL interpretation as a propositional interpretation in the sense 
        of Definition \ref{latex_lib_label_50}
        (we do not make explicit the notions of states, transition systems and labelling).
        \begin{example2}
    The interpretation $\{ p, q\} \rightarrow \{ p\} \rightarrow \{ q\} \rightarrow \{ p, q\} \rightarrow \{ \} \rightarrow \{ \} \rightarrow \mbox{\dots }$
          is formally represented as the function ${\sigma}$ s.t.
          \begin{displaymath}
      \begin{gathered}
        \sigma ( p, 0) =\mathrm{true}\\
        \sigma ( q, 0) =\mathrm{true}\\
        \sigma ( p, 1) =\mathrm{true}\\
        \sigma ( q, 1) =\mathrm{false}\\
        \sigma ( p, 2) =\mathrm{false}\\
        \sigma ( q, 2) =\mathrm{true}\\
        \sigma ( p, 3) =\mathrm{true}\\
        \sigma ( q, 3) =\mathrm{true}\\
        \sigma ( p, 4) =\mathrm{false}\\
        \sigma ( q, 4) =\mathrm{false}\\
      \end{gathered}
    \end{displaymath}
  \end{example2}
        Then $\sigma ( t) $ denotes the set of variables $p$ that are true at time $t$, i.e. such that $( p, t) \in \sigma $ (in the previous example, $\sigma ( 0) =\{ p, q\} $, $\sigma ( 1) =\{ p\} $, etc.).
        The satisfaction relation of an LTL formula ${\phi}$ under such an interpretation ${\sigma}$ is defined w.r.t. an instant $t$, written $\sigma , t\models \phi $.
        This means that the formula ${\phi}$ holds at time $t$.
  \begin{definition2}
    \label{latex_lib_label_52}Let ${\phi}$ be an LTL formula, ${\sigma}$ be a propositional interpretation  and $t\in \mathbb{N}$.
          The relation $\sigma , t\models \phi $ is inductively defined as follows:
          \begin{displaymath}
      \begin{aligned}
        \sigma , t & \models \top \\
        \sigma , t & \models p\ \mbox{iff}\ ( p, t) \in \sigma \\
        \sigma , t & \models \neg \phi \ \mbox{iff}\ \sigma , t\not \models \phi \\
        \sigma , t & \models \phi _{1}\land \phi _{2}\ \mbox{iff}\ \sigma , t\models \phi _{1}\ \mbox{and}\ \sigma , t\models \phi _{2}\\
        \sigma , t & \models \mathrm{X}\phi \ \mbox{iff}\ \sigma , t+1\models \phi \\
        \sigma , t & \models \phi _{1}\mathrm{U}\phi _{2}\ \mbox{iff}\ \exists k\in \mathbb{N}\mbox{ s.t. }\forall i\in \mathbb{N}, i<k\Rightarrow \sigma , t+i\models \phi _{1}\mbox{ and }\sigma , t+k\models \phi _{2}\\
      \end{aligned}
    \end{displaymath}
          The notation $\sigma \models \phi $ means that ${\phi}$ is true in ${\sigma}$ at time $0$.
          \end{definition2}

  A fundamental property of LTL is the ``ultimately periodic model property''.
        Namely, if an LTL formula is satisfiable, then it is satisfiable on some ultimately periodic interpretation.
  \begin{definition2}
    \label{latex_lib_label_2}An \emph{ultimately periodic (``UP'') interpretation} is an LTL interpretation ${\sigma}$ s.t.
          there exist $k,l\in \mathbb{N}$ s.t. $l>0$ and for all $m\geq k$, $\sigma ( m) =\sigma ( m+l) $.
          The sequence $\sigma ( 0) \ldots \sigma ( k-1) $ is the \emph{prefix} of ${\sigma}$ and $\sigma ( k) \ldots \sigma ( k+l-1) $ its \emph{loop},
          $k$ is the \emph{prefix index} and $l$ is the \emph{period}.
  \end{definition2}
  \begin{theorem}[\cite{ltl-complexity}]
    \label{latex_lib_label_1}Any satisfiable LTL formula has a UP model.
  \end{theorem}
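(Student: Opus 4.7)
The plan is to combine a pigeonhole argument on a suitable notion of ``local state'' at each instant of a model with a careful loop-folding construction. Given a satisfiable LTL formula $\phi$ and a model $\sigma$ with $\sigma \models \phi$, the objective is to extract from $\sigma$ a prefix and a loop from which a UP interpretation satisfying $\phi$ can be built.

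First, I would work in the Fischer--Ladner-style closure $\mathrm{cl}(\phi)$ of $\phi$ (all subformulas of $\phi$ together with their negations, identified up to double negation). For each $t \in \mathbb{N}$ let $S_t = \{\psi \in \mathrm{cl}(\phi) : \sigma, t \models \psi\}$. Each $S_t$ belongs to the finite set $2^{\mathrm{cl}(\phi)}$, so only finitely many distinct values occur. However, matching states $S_i = S_j$ alone are not sufficient: an Until subformula $\phi_1 \mathrm{U}\phi_2 \in S_i$ may be fulfilled in $\sigma$ only by a witness lying far beyond $j$, and naively repeating the segment $\sigma(i) \ldots \sigma(j-1)$ would destroy that witness. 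I would therefore augment the state at $t$ with an ``obligation vector'' recording, for each Until subformula $\phi_1 \mathrm{U} \phi_2 \in \mathrm{cl}(\phi)$, whether its right-hand side $\phi_2$ has been made true at some instant since the most recent point where the obligation vector was saturated. Applying pigeonhole to this augmented state yields indices $i < j$ with identical augmented states such that every pending eventuality at $i$ is witnessed within the interval $[i, j-1]$.

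Given such $i$ and $j$, I would define the UP interpretation $\sigma'$ by $\sigma'(t) = \sigma(t)$ for $t < j$ and $\sigma'(t) = \sigma(i + ((t-i) \bmod (j-i)))$ for $t \geq i$, so that $\sigma'$ is ultimately periodic with prefix index $i$ and period $j-i$ in the sense of Definition~\ref{latex_lib_label_2}. The remaining step is to prove $\sigma' \models \phi$, which I would establish by induction on $\mathrm{cl}(\phi)$, showing that for every $\psi \in \mathrm{cl}(\phi)$ and every $t$, $\sigma', t \models \psi$ iff $\sigma, t \models \psi$. The Boolean and $\mathrm{X}$ cases are routine; the $\mathrm{U}$ case is where the augmented-state property does the work, both for positions inside the loop (the chosen witness inside $[i, j-1]$ is preserved by folding) and for positions in the prefix (the original witness is either before $j$, hence untouched, or the obligation propagates forward to position $i$, where by construction it is discharged within one turn of the loop).

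The main obstacle, as expected, is the $\mathrm{U}$ operator: one must simultaneously fulfil every pending eventuality inside the loop and avoid introducing new eventualities that the folded interpretation cannot witness. Calibrating the augmented state so that equality at $i$ and $j$ forces both directions of the inductive equivalence is the technical heart of the argument. An alternative, arguably cleaner, route would be to invoke the standard translation from LTL to a B\"uchi automaton $\mathcal{A}_\phi$ and exploit the fact that every non-empty B\"uchi-recognisable language contains a ``lasso'' word, but this only displaces the same combinatorial content into the automaton construction.
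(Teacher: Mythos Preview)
The paper does not prove this theorem: it is stated with a citation to \cite{ltl-complexity} (Sistla and Clarke) and used as a black box throughout. There is therefore no ``paper's own proof'' to compare against.

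Your sketch is a faithful reconstruction of the classical argument. One imprecision worth tightening: the inductive claim ``for every $\psi \in \mathrm{cl}(\phi)$ and every $t$, $\sigma', t \models \psi$ iff $\sigma, t \models \psi$'' is too strong as stated, since $\sigma$ and $\sigma'$ may diverge from instant $j$ onward and there is no reason the equivalence should survive there. The statement you actually need (and can prove) restricts $t$ to $\{0,\dots,j-1\}$, which suffices because $\phi$ is evaluated at time $0$; the behaviour of $\sigma'$ for $t \geq j$ is then governed by periodicity together with the already-established equivalence at instant $i$. With that adjustment, both the direct combinatorial route and the B\"uchi-automaton alternative you mention are standard and correct.
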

  This important result allows to focus exclusively on \emph{finite} sets of instants.
              Indeed it is sufficient to give the values of a UP interpretation for time $0$ to $k+l$.
        Other values until ${\omega}$ can then be computed.
  \begin{example2}
    Figure \ref{latex_lib_label_70} represents a UP model of $\mathrm{G}\mathrm{F}p$.
  \end{example2}
  \begin{figure}[h]
    \begin{displaymath}
      \mbox{\input{exupm.tex}}
    \end{displaymath}
    \caption{A UP model of $\mathrm{G}\mathrm{F}p$}\label{latex_lib_label_70}
  \end{figure}
  \subsection{Schemata}We now recall the syntax and semantics of schemata 
          (for simplicity, the considered definitions are slightly more restrictive than the ones of \cite{tab09}).
          Let $\mathcal{\MakeUppercase{\ensuremath{e}}}$ be the set of Presburger arithmetic expressions,
          i.e. terms built over a countably infinite set of \emph{arithmetic variables} $\mathcal{X}$
          and on the signature containing $0$, $\mathrm{succ}$, $+$ and possibly all the constant symbols in $\mathbb{N}$\footnote{Such constants may be encoded in unary,
          as terms of the form $\mathrm{succ}^{k}( 0) $ but also in binary, as sequences of digits.
          As we shall see, the choice between the two encodings has a significant influence on the complexity of the translation: polynomial translation of schemata into LTL is feasible only if numbers are encoded in unary.}.
          As usual a term is \emph{ground} iff it contains no variable.
          Notice that every ground expression will be considered the same as the natural number it represents.
  \begin{definition2}
    \label{latex_lib_label_72}The syntax of schemata over the set of propositional variables $\mathcal{P}$ is given by the following grammar:
            \begin{displaymath}
      s\thinspace ::=\thinspace \top \thinspace |\thinspace p_{e}\thinspace |\thinspace \neg s\thinspace |\thinspace s\land s\thinspace |\thinspace {\displaystyle \bigwedge _{\mathsf{i}=0}^{\mathsf{n}-1}s}
    \end{displaymath}
  \end{definition2}
   where $p\in \mathcal{P}$, $e\in \mathcal{\MakeUppercase{\ensuremath{e}}}$ and $\mathsf{i}, \mathsf{n}\in \mathcal{X}$.
          $\bigvee _{\mathsf{i}=0}^{\mathsf{n}-1}s$ is defined as ${\neg}$$\bigwedge _{\mathsf{i}=0}^{\mathsf{n}-1}\neg s$ and $\lor $, $\Rightarrow $ and $\Leftrightarrow $ are defined as usual.
  \begin{example2}
    $p_{0}\land \bigwedge _{\mathsf{i}=0}^{\mathsf{n}-1}( p_{\mathsf{i}}\Rightarrow p_{\mathsf{i}+1}) \land \neg p_{\mathsf{n}}$ 
            and $\bigwedge _{\mathsf{i}=0}^{\mathsf{n}-1}p_{\mathsf{i}}\land \bigvee _{\mathsf{i}=0}^{\mathsf{n}-1}\neg p_{\mathsf{i}}$ are schemata.
  \end{example2}
  \begin{remark2}
    This definition is less general than the one originally introduced in \cite{tab09} because all integers occurring in the schema must be positive
            (we consider Presburger arithmetic instead of linear arithmetic).
            This was not the case in \cite{tab09}, but it is easy to check that both formalisms have exactly the same expressive power.
            Furthermore the iterations are here restricted to go from $0$ to $\mathsf{n}-1$.
            Once again this is not restrictive w.r.t. to the expressive power,
            but it allows to get rid of tedious additional restrictions that would be needed otherwise.
            \end{remark2}
           Schemata of the form $p_{e}$ are called \emph{indexed propositions},
          and those of the form $\bigwedge _{\mathsf{i}=0}^{\mathsf{n}-1}s$ 
                  are called \emph{iterated  conjunctions} 
           or simply \emph{iterations}.
                  The variable  $\mathsf{i}$  is  \emph{bound} in $\bigwedge _{\mathsf{i}=0}^{\mathsf{n}-1}s$.
                  The essential point of schemata is that iterations are \emph{symbolic expressions}:
          $\mathsf{n}$ is a formal variable, called a \emph{parameter}, not a meta variable denoting any number.
          From now on, we assume that all schemata have only one parameter called $\mathsf{n}$.
          This is not restrictive for the scope of this paper (see \cite{schema-resolution}).\par         A schema is interpreted by first  giving a  value to the parameter -- which gives raise to a propositional formula  ${\phi}$,
          called an ``instance'' of the schema -- and then by giving a value to the propositional variables of ${\phi}$.
          Note that a schema has an infinite set of instances.
                  If $s$ is a schema or an arithmetic expression,
          $\mathsf{i}$ is an arithmetic variable and $e$ is an arithmetic expression,
          then $s$$[e/\mathsf{i}]$ denotes the expression obtained from $s$ by replacing every free occurrence of $\mathsf{i}$ by $e$.
          Note that, if $e$ is ground and $s$ is an arithmetic expression containing only the variable $\mathsf{i}$, then $s$$[e/\mathsf{i}]$ is a ground arithmetic expression,
          i.e. a natural number.
          Then:
  \begin{definition2}
    \label{latex_lib_label_23}Let $s$ be a schema of parameter $\mathsf{n}$ and $m\in \mathbb{N}$.
            The \emph{instance} of $s$ w.r.t. $m$ is the propositional formula
            $\langle s\rangle _{m}$ inductively defined as follows:
            \begin{displaymath}
      \begin{aligned}
        \langle p_{e}\rangle _{m} & \stackrel{{\mbox{\tiny{def}}}}{=}p_{e[m/\mathsf{n}]}\\
        \langle \neg s\rangle _{m} & \stackrel{{\mbox{\tiny{def}}}}{=}\neg \langle s\rangle _{m}\\
        \langle s_{1}\land s_{2}\rangle _{m} & \stackrel{{\mbox{\tiny{def}}}}{=}\langle s_{1}\rangle _{m}\land \langle s_{2}\rangle _{m}\\
        \langle {\displaystyle \bigwedge _{\mathsf{i}=0}^{\mathsf{n}-1}s}\rangle _{m} & \stackrel{{\mbox{\tiny{def}}}}{=}\top \mbox{ if $m=0$}\\
        \langle {\displaystyle \bigwedge _{\mathsf{i}=0}^{\mathsf{n}-1}s}\rangle _{m} & \stackrel{{\mbox{\tiny{def}}}}{=}\langle s[0/\mathsf{i}]\rangle _{m}\land \mbox{\dots }\land \langle s[m-1/\mathsf{i}]\rangle _{m}\mbox{ otherwise}\\
      \end{aligned}
    \end{displaymath}
            \end{definition2}
  \begin{example2}
    \label{latex_lib_label_71}
    \begin{displaymath}
      \begin{aligned}
        \langle p_{0}\land \bigwedge _{\mathsf{i}=0}^{\mathsf{n}-1}( p_{\mathsf{i}}\Rightarrow p_{\mathsf{i}+1}) \land \neg p_{\mathsf{n}}\rangle _{0} & =p_{0}\land \neg p_{0}\\
      \end{aligned}
    \end{displaymath}
    \begin{displaymath}
      \begin{aligned}
        \langle p_{0}\land \bigwedge _{\mathsf{i}=0}^{\mathsf{n}-1}( p_{\mathsf{i}}\Rightarrow p_{\mathsf{i}+1}) \land \neg p_{\mathsf{n}}\rangle _{1} & =p_{0}\land ( p_{0}\Rightarrow p_{1}) \land \neg p_{1}\\
        \langle p_{0}\land \bigwedge _{\mathsf{i}=0}^{\mathsf{n}-1}( p_{\mathsf{i}}\Rightarrow p_{\mathsf{i}+1}) \land \neg p_{\mathsf{n}}\rangle _{2} & =p_{0}\land ( p_{0}\Rightarrow p_{1}) \land ( p_{1}\Rightarrow p_{2}) \land \neg p_{2}\\
      \end{aligned}
    \end{displaymath}
     etc.
  \end{example2}
  An instance is a usual propositional formula except that each variable is indexed with a natural number.
          So we just need a propositional interpretation to interpret this formula as usual:
  \begin{definition2}
    Let ${\phi}$ be a propositional formula whose variables are indexed by natural numbers,
            and ${\sigma}$ a propositional interpretation.
            Then $\sigma \models \phi $ is defined as usual by induction on the structure of ${\phi}$ with the exception that,
            for any indexed propositional variable $p_{k}$, $\sigma \models p_{k}$ iff $( p, k) \in \sigma $.
  \end{definition2}
  We thus define a \emph{schema interpretation} as a pair consisting of a propositional interpretation and a natural number.
                  \begin{definition2}
    A schema $s$ is \emph{true in a schema interpretation $( \sigma , n) $} iff $\sigma \models \langle s\rangle _{n}$.
            We also use the notation $\models $ for schemata:
              $\mathfrak{I}\models s$ iff the schema $s$ is true in the schema interpretation $\mathfrak{I}$.
  \end{definition2}
  \begin{example2}
    $p_{0}\land \bigwedge _{\mathsf{i}=0}^{\mathsf{n}-1}( p_{\mathsf{i}}\Rightarrow p_{\mathsf{i}+1}) \land \neg p_{\mathsf{n}}$ is unsatisfiable (see its set of instances in Example \ref{latex_lib_label_71})
            as well as $\bigwedge _{\mathsf{i}=0}^{\mathsf{n}-1}p_{\mathsf{i}}\land \bigvee _{\mathsf{i}=0}^{\mathsf{n}-1}\neg p_{\mathsf{i}}$;
            $p_{0}\land \bigwedge _{\mathsf{i}=0}^{\mathsf{n}-1}( p_{\mathsf{i}}\Rightarrow p_{\mathsf{i}+1}) $ is satisfiable.
  \end{example2}

  The satisfiability problem for schemata is undecidable in general \cite{tab09}.
          However various decidable classes are investigated in \cite{tab09,nested,jair2011}.
          In the following, we will focus on the translation of LTL from/to ``sequential'' schemata:
  \begin{definition2}
    \label{latex_lib_label_19}A schema is a \emph{sequential propositional schema} (``SPS'') iff all the following conditions hold:
            \begin{itemize}
      \item it contains no nested iteration (iterations in the scope of another iteration);
      \item every index of a variable outside an iteration is of the form $k$ or $\mathsf{n}+k$, where $k\in \mathbb{N}$ and $\mathsf{n}$ is the parameter;
      \item every index of a variable inside an iteration $\bigwedge _{\mathsf{i}=0}^{\mathsf{n}-1}s$ is of the form $\mathsf{i}+k$, where $k\in \mathbb{N}$.
    \end{itemize}
  \end{definition2}
  \begin{example2}
    $p_{0}\land \bigwedge _{\mathsf{i}=0}^{\mathsf{n}-1}( p_{\mathsf{i}}\Rightarrow p_{\mathsf{i}+1}) $, $p_{0}\land \bigwedge _{\mathsf{i}=0}^{\mathsf{n}-1}( p_{\mathsf{i}}\Rightarrow p_{\mathsf{i}+1}) \land \neg p_{\mathsf{n}}$
            and $\bigwedge _{\mathsf{i}=0}^{\mathsf{n}-1}p_{\mathsf{i}}\land \bigvee _{\mathsf{i}=0}^{\mathsf{n}-1}\neg p_{\mathsf{i}}$ are SPS;
            $\bigwedge _{\mathsf{i}=0}^{\mathsf{n}}p_{\mathsf{i}}\land \bigvee _{\mathsf{i}=0}^{\mathsf{n}}\neg p_{\mathsf{i}}$, $p_{2\mathsf{n}}\land \bigwedge _{\mathsf{i}=0}^{\mathsf{n}-1}p_{\mathsf{i}}$, $\bigwedge _{\mathsf{i}=0}^{\mathsf{n}-1}p_{2\mathsf{n}}$, and $\bigwedge _{\mathsf{i}=0}^{\mathsf{n}-1}p_{2\mathsf{i}}$
            are not.
  \end{example2}
  Informally, an SPS represents a structure which is sequentially repeated, $\mathsf{n}$ being considered as the length of the sequence.
          SPS belong to the class of ``regular'' schemata, for which the satisfiability problem is proved to be decidable in \cite{tab09}.
          \section{Translating interpretations}\label{latex_lib_label_65}In the next sections we will provide translations of LTL formulae into SPS and conversely.
      Some semantic translations underlie those syntactic ones.
      We make them explicit now in order to give preliminary insights.\subsection{From schemata to LTL}Consider a schema interpretation  $( \sigma , n) $.
        Given a schema interpretation $( \sigma , n) $, its first component ${\sigma}$ can already be considered as an LTL interpretation,
        but we still need to represent the second component $n$.
        This is done by using special LTL interpretations (which are also propositional interpretations) called ``initial segments'':
  \begin{definition2}
    Let ${\sigma}$ be a propositional interpretation over a set of variables $\mathcal{P}$.
          ${\sigma}$ is an \emph{initial segment} of length $k\in \mathbb{N}$ for some $p\in \mathcal{P}$
          iff $( p, t) \in \sigma $ when $t<k$, and $( p, t) \not \in \sigma $ otherwise.
  \end{definition2}
  \begin{example2}
    Figure \ref{latex_lib_label_14} provides a graphical representation of an initial segment of length $4$ for $p$.
  \end{example2}
  \begin{figure}[h]
    \input{expic.tex}\caption{Initial segment of length $4$ for $p$.}\label{latex_lib_label_14}
  \end{figure}

        The key feature of initial segments is that they can be put in correspondence with natural numbers.
        Namely, we can associate a canonical initial segment to every natural number and a natural number to every initial segment.
              This correspondence allows us to define the following transformation for  schema interpretations:
  \begin{definition2}
    \label{latex_lib_label_53}Let $\mathcal{P}$ be a set of propositional variables and let $\mbox{``$\mathrm{t<\mathsf{n}}$''}\not \in \mathcal{P}$ be a propositional variable.
          Let $\mathfrak{I}=( \sigma , n) $ be a  schema interpretation over $\mathcal{P}$.
          Then $\llfloor \mathfrak{I}\rrfloor $ is the propositional interpretation (and thus also an LTL interpretation) over $\mathcal{P}\cup \{ \mathrm{t<\mathsf{n}}\} $
          which is an initial segment of length $n$ for $\mathrm{t<\mathsf{n}}$ and which is defined as ${\sigma}$ over $\mathcal{P}$.
          Conversely, $\llfloor .\rrfloor ^{-1}$ is the function that maps every initial segment ${\sigma}$ of length $n$ for $\mathrm{t<\mathsf{n}}$
          to the schema interpretation $( \tau , n) $ where ${\tau}$ is the restriction of ${\sigma}$ to $\mathcal{P}$.
          \end{definition2}
  \begin{example2}
    Let $\mathfrak{I}$ be the schema interpretation $( \{ p_{0}, q_{0}, p_{1}, p_{2}, q_{3}\} , 3) $.
          Then $\llfloor \mathfrak{I}\rrfloor =\{ p, q, \mathrm{t<\mathsf{n}}\} \rightarrow \{ p, \mathrm{t<\mathsf{n}}\} \rightarrow \{ p, \mathrm{t<\mathsf{n}}\} \rightarrow \{ q\} \rightarrow \{ \} \rightarrow \{ \} \rightarrow \mbox{\dots }$
          Conversely, let ${\sigma}$ be the LTL interpretation
          $\{ q, \mathrm{t<\mathsf{n}}\} \rightarrow \{ q, \mathrm{t<\mathsf{n}}\} \rightarrow \{ p, \mathrm{t<\mathsf{n}}\} \rightarrow \{ p, q, \mathrm{t<\mathsf{n}}\} \rightarrow \{ p\} \rightarrow \{ p\} \rightarrow \mbox{\dots }$,
          then $\llfloor \sigma \rrfloor ^{-1}=( \{ q_{0}, q_{1}, p_{2}, p_{3}, q_{3}, p_{4}, p_{5}, \mbox{\dots }\} , 4) $.
  \end{example2}

  The map $\llfloor .\rrfloor $ is a bijection between schema interpretations over $\mathcal{P}$
        and initial segments over $\mathcal{P}\cup \{ \mathrm{t<\mathsf{n}}\} $.
        Indeed, $\llfloor .\rrfloor ^{-1}$ is its inverse.
  \begin{remark2}
    An important difference between schemata and LTL is the fact that all
          interpretations of schemata are \emph{finite}, whereas those of LTL are \emph{infinite} (i.e. time is unbounded).
          Initial segments thus allow us to simulate finite models in LTL.
  \end{remark2}
   Finally notice that the set of initial segments can be specified in LTL as follows:
  \begin{proposition2}
    \label{latex_lib_label_31}Let ${\phi }_{<}^{\mathrm{t<\mathsf{n}}}$
          be the following formula: 
    \begin{displaymath}
      {\phi }_{<}^{\mathrm{t<\mathsf{n}}}\stackrel{{\mbox{\tiny{def}}}}{=}( \mathrm{t<\mathsf{n}}) \mathrm{U}\mathrm{G}( \neg \mathrm{t<\mathsf{n}}) 
    \end{displaymath}
          Then an LTL interpretation is a model of ${\phi }_{<}^{\mathrm{t<\mathsf{n}}}$ iff it is an initial segment for $\mathrm{t<\mathsf{n}}$.
  \end{proposition2}
  \begin{proof}
    An interpretation is a model of ${\phi }_{<}^{\mathrm{t<\mathsf{n}}}$ iff it makes $\mathrm{t<\mathsf{n}}$ true until ${\neg}$$\mathrm{t<\mathsf{n}}$ always holds.
          Let us write $k$ for the first instant where $\mathrm{t<\mathsf{n}}$ does not hold.
          Then this is equivalent to say that $\mathrm{t<\mathsf{n}}$ holds at time $t$ iff $t<k$.{\qed}
  \end{proof}
  We can also specify a proposition $\mathrm{\mbox{eq}^{n}}$ that is true only at time $n$.
        This is axiomatized by: 
        \begin{displaymath}
    \mathrm{Ax}_{\mathrm{t=\mathsf{n}}}\stackrel{{\mbox{\tiny{def}}}}{=}\mathrm{G}( \mathrm{t<\mathsf{n}}\land \neg \mathrm{X}( \mathrm{t<\mathsf{n}}) \Leftrightarrow \mathrm{X}( \mathrm{\mbox{eq}^{n}}) ) \land ( \neg \mathrm{t<\mathsf{n}}\Leftrightarrow \mathrm{\mbox{eq}^{n}}) 
  \end{displaymath}
        To improve readability, $\mathrm{\mbox{eq}^{n}}$ will be written $\mathrm{t=\mathsf{n}}$.
  \begin{proposition2}
    \label{latex_lib_label_62}Let ${\sigma}$ be an initial segment for $\mathrm{t<\mathsf{n}}$ of length $n$ s.t. $\sigma , 0\models \mathrm{Ax}_{\mathrm{t=\mathsf{n}}}$.
          Then  $\sigma , t\models \mathrm{t=\mathsf{n}}$ iff $t=n$.
  \end{proposition2}
  \begin{proof}
    By definition, $\mathrm{t<\mathsf{n}}$ holds at time $t$ iff $t<n$.
          If $n=0$ then $\mathrm{t<\mathsf{n}}$ never holds, in particular, $\mathrm{t<\mathsf{n}}$ does not hold at time $0$.
          Since $\sigma \models \mathrm{Ax}_{\mathrm{t=\mathsf{n}}}$, ${\sigma}$ satisfies its second conjunct,
          and as $\mathrm{t<\mathsf{n}}$ does not hold at time $0$, $\mathrm{\mbox{eq}}^{n}$ (i.e. $t=n$) holds at time $0$.
          Furthermore, since ${\sigma}$ satisfies the first conjunct and $\mathrm{t<\mathsf{n}}$ never holds again, $t=n$ is never satisfied again.
          Suppose now $n\ne 0$, then there is indeed at least one instant s.t. $\mathrm{t<\mathsf{n}}$ holds.
          Thus $\mathrm{t<\mathsf{n}}$ holds at time $n-1$ and not at time $n$, which corresponds precisely to the first conjunct of $\mathrm{Ax}_{\mathrm{t=\mathsf{n}}}$.
          Furthermore $n$ is the only instant with this property hence the result.{\qed}
  \end{proof}
  \subsection{From LTL to  schemata}\label{latex_lib_label_57}The inverse translation is harder: embedding LTL into schemata means that we must represent the
        infinite interpretations of LTL using only schema interpretations, which are finite.
        Of course this is impossible in general. 
        However, as we are concerned with satisfiability, 
        we can make use of Theorem \ref{latex_lib_label_1} and restrict ourselves to UP interpretations.
        Since such interpretations can be finitely represented, we will be able to embed them into schema interpretations.
        The representation of UP interpretations within schemata is achieved via 
        particular schema interpretations called ``$2$-initial segments'':
  \begin{definition2}
    A  schema interpretation $\mathfrak{I}=( \sigma , n) $ is a \emph{$2$-initial segment} for a propositional variable $p$
          iff there exists $k\leq n$ s.t., for every $l\in \{ 0, \mbox{\dots }, n\} $, we have $( p, l) \in \sigma \Leftrightarrow l<k$.
          We call $k$ the \emph{short length} of $\mathfrak{I}$ and $n+1$ is its \emph{long length}.
  \end{definition2}
  \begin{example2}
    The schema interpretation $( \{ p_{0}, p_{1}, p_{2}\} , 5) $ is a $2$-initial segment w.r.t. $p$(see Figure \ref{latex_lib_label_28}).
          Its short length is $3$, its long length is $6$.
  \end{example2}
  \begin{figure}[h]
    \input{ex2is.tex}\caption{$2$-initial segment of short length $3$ and long length $6$ for $p$.}\label{latex_lib_label_28}
  \end{figure}
   We call this a $2$-initial segment because two initial segments are characterized:
        $\{ 0, \mbox{\dots }, k-1\} $ (characterized by $\mathfrak{I}$) and $\{ 0, \mbox{\dots }, n\} $ (characterized by $n$).
        Notice, however, that the segment $\{ 0, \mbox{\dots }, k-1\} $ is characterized by $p$ \emph{only below $n$},
        i.e. the value of $p$ is not specified above $n$.
        This is not a problem since we will not need such values in the translations.

  The notion of $2$-initial segment is useful because,
        much in the same way in which initial segments correspond to natural numbers,
        $2$-initial segments correspond to \emph{pairs} of different natural numbers.
        We can now define the following transformation for UP interpretations:
  \begin{definition2}
    \label{latex_lib_label_54}Let ${\sigma}$ be a UP interpretation of prefix index $k$ (i.e. the loop starts at time $k$)
          and of period $l$ over a set $\mathcal{P}$, and let $\mbox{``$\mathrm{pfx}$''}\not \in \mathcal{P}$ be a propositional variable.
          Then $\llceil \sigma \rrceil $ is the  schema interpretation $( \tau , k+l-1) $ 
          where ${\tau}$ is defined as an initial segment of length $k$ for $\mathrm{pfx}$ and preserving the value of ${\sigma}$ on $\mathcal{P}$.
  \end{definition2}
  \begin{example2}
    Let ${\sigma}$ be the UP interpretation of prefix index $2$ and period $3$ (totally) defined by:
          $\{ p, q, r\} \rightarrow \{ p\} \rightarrow \{ q, r\} \rightarrow \{ p, q\} \rightarrow \{ q, r\} $
          Then $\llceil \sigma \rrceil =( \{ \mathrm{pfx}_{0}, p_{0}, q_{0}, r_{0}, \mathrm{pfx}_{1}, p_{1}, q_{2}, r_{2}, p_{3}, q_{3}, q_{4}, r_{4}\} , 4) $.
          \end{example2}
  \begin{remark2}
    The map $\llceil .\rrceil $ embeds the prefix index and the period inside schema interpretations,
          but it is impossible to specify the fact that an interpretation is a UP interpretation:
          indeed this would require to express that the interpretation loops \emph{indefinitely}.
          Such a specification of an ``infinite'' behaviour cannot be achieved with schemata.
          This will not be a problem in the following because, when focusing on a given LTL formula,
          one only needs to specify this behaviour in the range $\{ 0, \mbox{\dots }, k+l-1\} $.\par       For similar reasons, $\llceil .\rrceil $ is \emph{not} a bijection in general, unlike $\llfloor .\rrfloor $.
          It is actually a bijection between UP interpretations and $2$-initial segments
          if we restrict the latter to the values assigned to variables whose index is between $0$ and $k+l-1$.
          This will indeed be the case in our reduction since, as just explained, we will not need the values for other indices.
          Then $\llceil .\rrceil ^{-1}$ is defined as follows:
          \end{remark2}
  \begin{definition2}
    \label{latex_lib_label_59}Let $( \sigma , n) $ be a $2$-initial segment for $\mathrm{pfx}$.
          Then $\llceil \sigma , n\rrceil ^{-1}$ is defined as the unique UP interpretation such that:
          \begin{itemize}
      \item its prefix is the set of instants s.t. $\mathrm{pfx}$ holds in $\mathfrak{I}$;
      \item its period $l$ is $n-k+1$, where $k$ is the prefix index;
      \item for all $p\ne \mathrm{pfx}$ and all $t\leq n$, $( p, t) \in \llceil \mathfrak{I}, n\rrceil ^{-1}$ iff $( p, t) \in \mathfrak{I}$.
    \end{itemize}
  \end{definition2}
  \begin{example2}
    Let $\mathfrak{I}=( \{ \mathrm{pfx}_{0}, p_{0}, \mathrm{pfx}_{1}, q_{1}, p_{2}, p_{3}, q_{3}\} , 3) $.
          Then $\llceil \mathfrak{I}\rrceil ^{-1}$ is the UP interpretation of prefix index $2$ and period $2$ defined by
          $\llceil \mathfrak{I}\rrceil ^{-1}=\{ p\} \rightarrow \{ q\} \rightarrow \{ p\} \rightarrow \{ p, q\} \rightarrow \mbox{\dots }$
          where the contents of the dots can be retrieved by the UP property of the interpretation.
  \end{example2}

  Finally, $2$-initial segments can be specified using schemata:
  \begin{proposition2}
    \label{latex_lib_label_35}Let ${\mathrm{s}}_{\leq }^{\mathrm{pfx}}$ be the
          following SPS:
            \begin{displaymath}
      {\mathrm{s}}_{\leq }^{\mathrm{pfx}}\stackrel{{\mbox{\tiny{def}}}}{=}\neg \mathrm{pfx}_{\mathsf{n}}\land \bigwedge _{\mathsf{i}=0}^{\mathsf{n}-1}( \mathrm{pfx}_{\mathsf{i}+1}\Rightarrow \mathrm{pfx}_{\mathsf{i}}) 
    \end{displaymath}
            Then a
          schema interpretation is a model of ${\mathrm{s}}_{\leq }^{\mathrm{pfx}}$ iff it is a $2$-initial segment for $\mathrm{pfx}$.
  \end{proposition2}
  \begin{proof}
    Let $\mathfrak{I}=( \sigma , n) $ be a model of ${\mathrm{s}}_{\leq }^{\mathrm{pfx}}$.
          For any $k\in \mathbb{N}$ s.t. $\mathrm{pfx}_{k}$ holds, $\mathrm{pfx}_{k'}$ holds for every $k'<k$, because ${\sigma}$ satisfies the second conjunct of ${\mathrm{s}}_{\leq }^{\mathrm{pfx}}$.
          Furthermore there is a maximal such $k<n$ , because $\mathrm{pfx}_{k}$ cannot hold at time $n$, by the first conjunct.
          Hence $\mathfrak{I}$ is indeed a $2$-initial segment.\par       Conversely, let $\mathfrak{I}=( \sigma , n) $ be a $2$-initial segment for $\mathrm{pfx}$ of short length $k$.
          Then, for every $l\in \{ 0, \mbox{\dots }, n\} $, $\mathrm{pfx}_{l}$ holds iff $l<k$.
          Since $k\leq n$, $\mathrm{pfx}_{n}$ cannot hold, hence the first conjunct is indeed satisfied.
          Furthermore for every $l\in \{ 0, \mbox{\dots }, n\} $, if $\mathrm{pfx}_{l+1}$ holds then $\mathrm{pfx}_{l}$ holds, hence the second conjunct is satisfied.{\qed}
  \end{proof}

  The beginning of the loop can be referred to by using a propositional variable ${\mathrm{eq}}_{\mathsf{i}}^{k}$,
        intended to be true only when $\mathsf{i}$ is equal to the prefix index $k$ of the interpretation.
        This can be axiomatized as follows:
        \begin{displaymath}
    \mathrm{Ax}_{\mathsf{i}=k}\stackrel{{\mbox{\tiny{def}}}}{=}( \neg \mathrm{pfx}_{0}\Leftrightarrow {\mathrm{eq}}_{0}^{k}) \land \bigwedge _{\mathsf{i}=0}^{\mathsf{n}-1}( \mathrm{pfx}_{\mathsf{i}}\land \neg \mathrm{pfx}_{\mathsf{i}+1}\Leftrightarrow {\mathrm{eq}}_{\mathsf{i}+1}^{k}) 
  \end{displaymath}
        To improve readability, ${\mathrm{eq}}_{\mathsf{i}}^{k}$ will be written ``$\mathsf{i}=k$''.
  \begin{proposition2}
    \label{latex_lib_label_73}Let $\mathfrak{I}$ be a $2$-initial segment of short length $k$ for $\mathrm{pfx}$ s.t. $\mathfrak{I}\models \mathrm{Ax}_{\mathsf{i}=k}$.
          Then, for every $i\in \{ 0, \mbox{\dots }, n\} $, $\mathfrak{I}\models {\mathrm{eq}}_{i}^{k}$ iff $i=k$.
  \end{proposition2}
  \begin{proof}
    If $k=0$, then $\mathfrak{I}\not \models \mathrm{pfx}_{0}$ hence $\mathsf{i}\models {\mathrm{eq}}_{0}^{k}$, by the first conjunct of $\mathrm{Ax}_{\mathsf{i}=k}$.
          Furthermore $\mathfrak{I}\not \models {\mathrm{eq}}_{i+1}^{k}$ for any $i\in \{ 0, \mbox{\dots }, n-1\} $ because $\mathrm{pfx}_{i}$ does not hold and by the second conjunct of $\mathrm{Ax}_{\mathsf{i}=k}$.\par       If $k>0$, then $\mathfrak{I}\models \mathrm{pfx}_{k-1}$ and $\mathfrak{I}\not \models \mathrm{pfx}_{k}$ hence, by the second conjunct, $\mathfrak{I}\models {\mathrm{eq}}_{k}^{k}$.
          Furthermore, no other instant $l$ between $0$ and $n$ has the property that $\mathfrak{I}\models \mathrm{pfx}_{l-1}$ and $\mathfrak{I}\not \models \mathrm{pfx}_{l}$, hence the equivalence.{\qed}
  \end{proof}
  \section{Embedding SPS in LTL}\label{latex_lib_label_27}We now show how SPS can be translated into LTL:
    given an SPS $s$, we build an LTL formula $\left\lfloor s\right\rfloor $ which is satisfiable iff $s$ is satisfiable.
    Then we show that the size of $\left\lfloor s\right\rfloor $ is polynomial or exponential w.r.t. the size of $s$, depending on the encoding of natural numbers 
    (in the arithmetic expressions occurring in $s$).
    As LTL satisfiability is in PSPACE, we can thus conclude that the satisfiability of SPS
    is also in PSPACE when numbers are encoded in unary.\subsection{The $\left\lfloor .\right\rfloor $ transformation}\label{latex_lib_label_18} The main desideratum of $\left\lfloor .\right\rfloor $ is that for every model $\mathfrak{M}$ of an SPS $s$,
    the interpretation $\llfloor \mathfrak{M}\rrfloor $ (Definition \ref{latex_lib_label_53}) is a model of $\left\lfloor s\right\rfloor $.
    An example is shown on Figure \ref{latex_lib_label_29} 
    (we represent LTL interpretations as sequences of sets of propositional variables, 
    instead of sets of pairs (variable, number), as they are formally defined;
    similarly, schema interpretations are represented as the set of true indexed propositions).
  \begin{figure}[h]
    \input{exspec.tex}\caption{Specification of $\left\lfloor .\right\rfloor $: example.}\label{latex_lib_label_29}
  \end{figure}

  By Proposition \ref{latex_lib_label_31},
    every interpretation such that ${\phi }_{<}^{\mathrm{t<\mathsf{n}}}$ (defined in Proposition \ref{latex_lib_label_31})
    holds is an initial segment of length $n$ for a propositional variable ``$\mathrm{t<\mathsf{n}}$''.
    Furthermore, $\mathrm{Ax}_{\mathrm{t=\mathsf{n}}}$ (defined after Proposition \ref{latex_lib_label_31}) enables to use the variable ``$\mathrm{t=\mathsf{n}}$''.
    Our translation thus includes those formulae.
  \begin{definition2}
    \label{latex_lib_label_8}Let $s$ be an  SPS.
      Then $\left\lfloor s\right\rfloor $ is an LTL formula defined as 
      $\left\lfloor s\right\rfloor \stackrel{{\mbox{\tiny{def}}}}{=}\left\lfloor s\right\rfloor _{\mathrm{prop}}\land {\phi }_{<}^{\mathrm{t<\mathsf{n}}}\land \mathrm{Ax}_{\mathrm{t=\mathsf{n}}}$
      where $\left\lfloor s\right\rfloor _{\mathrm{prop}}$ is inductively defined as follows:
      \begin{displaymath}
      \begin{aligned}
        \left\lfloor \top \right\rfloor _{\mathrm{prop}} & \stackrel{{\mbox{\tiny{def}}}}{=}\top \\
        \left\lfloor p_{k}\right\rfloor _{\mathrm{prop}} & \stackrel{{\mbox{\tiny{def}}}}{=}\mathrm{X}^{k}p\\
        \left\lfloor p_{\mathsf{n}+k}\right\rfloor _{\mathrm{prop}} & \stackrel{{\mbox{\tiny{def}}}}{=}\mathrm{G}( \mathrm{t=\mathsf{n}}\Rightarrow \mathrm{X}^{k}p) \\
        \left\lfloor p_{\mathsf{i}+k}\right\rfloor _{\mathrm{prop}} & \stackrel{{\mbox{\tiny{def}}}}{=}\mathrm{X}^{k}p\\
        \left\lfloor \neg s\right\rfloor _{\mathrm{prop}} & \stackrel{{\mbox{\tiny{def}}}}{=}\neg \left\lfloor s\right\rfloor _{\mathrm{prop}}\\
      \end{aligned}
    \end{displaymath}
    \begin{displaymath}
      \begin{aligned}
        \left\lfloor s_{1}\land s_{2}\right\rfloor _{\mathrm{prop}} & \stackrel{{\mbox{\tiny{def}}}}{=}\left\lfloor s_{1}\right\rfloor _{\mathrm{prop}}\land \left\lfloor s_{2}\right\rfloor _{\mathrm{prop}}\\
        \left\lfloor {\displaystyle \bigwedge _{\mathsf{i}=0}^{\mathsf{n}-1}s}\right\rfloor _{\mathrm{prop}} & \stackrel{{\mbox{\tiny{def}}}}{=}\mathrm{G}( \mathrm{t<\mathsf{n}}\Rightarrow \left\lfloor s\right\rfloor _{\mathrm{prop}}) \\
      \end{aligned}
    \end{displaymath}
      where $k\in \mathbb{N}$, $\mathsf{i}\ne \mathsf{n}$, and $\mathrm{X}^{k}$${\phi}$
      $\stackrel{{\mbox{\tiny{def}}}}{=}\underbrace {\mathrm{X}\mbox{\dots }\mathrm{X}}_{\times k}\phi $.
      \end{definition2}
  \begin{example2}
    \label{latex_lib_label_68}We have: $\left\lfloor p_{0}\land \bigwedge _{\mathsf{i}=0}^{\mathsf{n}-1}( p_{\mathsf{i}}\Rightarrow p_{\mathsf{i}+1}) \land \neg p_{\mathsf{n}}\right\rfloor =p\land \mathrm{G}( \mathrm{t<\mathsf{n}}\Rightarrow p\Rightarrow \mathrm{X}p) \land \neg \mathrm{G}( \mathrm{t=\mathsf{n}}\Rightarrow p) \land {\phi }_{<}^{\mathrm{t<\mathsf{n}}}\land \mathrm{Ax}_{\mathrm{t=\mathsf{n}}}$. Notice that it would be equivalent to have
      $p\land \mathrm{G}( \mathrm{t<\mathsf{n}}\Rightarrow p\Rightarrow \mathrm{X}p) \land \mathrm{G}( \mathrm{t=\mathsf{n}}\Rightarrow \neg p) \land {\phi }_{<}^{\mathrm{t<\mathsf{n}}}\land \mathrm{Ax}_{\mathrm{t=\mathsf{n}}}$
      because $\mathrm{t=\mathsf{n}}$ holds at only one moment.
      This variation is interesting because it does not introduce any eventuality, and is thus easier to handle for LTL decision procedures.
      It can be generalized, e.g., by putting every schema into n.n.f. \emph{before} the translation,
      and then by defining a dedicated case for negative literals.
  \end{example2}
   Figure \ref{latex_lib_label_29} can now be updated into Figure \ref{latex_lib_label_17}.
  \begin{figure}[h]
    \input{exbigpicture.tex}\caption{Big picture for $\bigwedge _{\mathsf{i}=0}^{\mathsf{n}}( p_{\mathsf{i}}\lor q_{\mathsf{i}}) $ and one of its models.}\label{latex_lib_label_17}
  \end{figure}
  \subsection{Soundness and completeness of $\left\lfloor .\right\rfloor $.}
  \begin{theorem}
    \label{latex_lib_label_12}Let  $s$  be  a  SPS.
        Then $\llfloor .\rrfloor $  is a bijection between the models of $s$ and the  models of $\left\lfloor s\right\rfloor $.
        The inverse bijection is $\llfloor .\rrfloor ^{-1}$ (Definition \ref{latex_lib_label_53}).
  \end{theorem}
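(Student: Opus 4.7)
The plan is to split the claim into three parts: (a) $\llfloor \cdot \rrfloor$ is a bijection at the level of interpretations, ignoring the truth of $s$; (b) the two auxiliary formulae $\phi_<^{\mathrm{t<\mathsf{n}}}$ and $\mathrm{Ax}_{\mathrm{t=\mathsf{n}}}$ carve out precisely the image of this bijection inside the LTL models; (c) $\mathfrak{M} \models s$ iff $\llfloor \mathfrak{M} \rrfloor, 0 \models \left\lfloor s \right\rfloor_{\mathrm{prop}}$. Part (a) is essentially Definition~\ref{latex_lib_label_53} together with the remark following it, the only subtlety being that $\llfloor \mathfrak{M} \rrfloor$ does not itself fix a value for $\mathrm{t=\mathsf{n}}$; this is handled in (b). For (b), Proposition~\ref{latex_lib_label_31} identifies the initial segments for $\mathrm{t<\mathsf{n}}$ with the models of $\phi_<^{\mathrm{t<\mathsf{n}}}$, and Proposition~\ref{latex_lib_label_62} shows that given any such initial segment there is exactly one way to interpret $\mathrm{t=\mathsf{n}}$ so as to satisfy $\mathrm{Ax}_{\mathrm{t=\mathsf{n}}}$. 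Composing (a) and (b) gives a bijection between schema interpretations and LTL models of $\phi_<^{\mathrm{t<\mathsf{n}}} \land \mathrm{Ax}_{\mathrm{t=\mathsf{n}}}$, with inverse given by first forgetting $\mathrm{t=\mathsf{n}}$ and then applying $\llfloor \cdot \rrfloor^{-1}$.

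Part (c) is the real work, and I would prove it by structural induction on $s$. The base cases $p_k$ and $p_{\mathsf{n}+k}$ follow immediately from Definition~\ref{latex_lib_label_53} together with the semantics of $\mathrm{X}^k$ (and, for $p_{\mathsf{n}+k}$, Proposition~\ref{latex_lib_label_62} is used to collapse the outer $\mathrm{G}(\mathrm{t=\mathsf{n}} \Rightarrow \cdot )$ to the single instant $t=n$, yielding $\llfloor \mathfrak{M} \rrfloor, n+k \models p$, which matches $\langle p_{\mathsf{n}+k} \rangle_n = p_{n+k}$). Boolean cases are routine.

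The main obstacle is the iteration case $\bigwedge_{\mathsf{i}=0}^{\mathsf{n}-1} s'$. The SPS restriction forces every index inside $s'$ to have the form $\mathsf{i}+k$, so $\left\lfloor s' \right\rfloor_{\mathrm{prop}}$ is built entirely from subformulae of the form $\mathrm{X}^k p$ combined with Boolean connectives. I would therefore prove an auxiliary \emph{shift lemma} by a subsidiary induction on the body $s'$: for every such $s'$ and every $i \in \{0,\dots,n-1\}$,
\[
  \mathfrak{M} \models \langle s'[i/\mathsf{i}] \rangle_n \iff \llfloor \mathfrak{M} \rrfloor, i \models \left\lfloor s' \right\rfloor_{\mathrm{prop}}.
\]
The key arithmetic step is that $\left\lfloor p_{\mathsf{i}+k} \right\rfloor_{\mathrm{prop}} = \mathrm{X}^k p$, which at time $i$ asks for $p$ at time $i+k$, exactly matching $\langle p_{\mathsf{i}+k}[i/\mathsf{i}] \rangle_n = p_{i+k}$; this is precisely where the SPS shape of inner indices becomes essential.

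With the shift lemma in hand, the iteration case unfolds cleanly: $\llfloor \mathfrak{M} \rrfloor, 0 \models \mathrm{G}(\mathrm{t<\mathsf{n}} \Rightarrow \left\lfloor s' \right\rfloor_{\mathrm{prop}})$ is, using that $\llfloor \mathfrak{M} \rrfloor$ is an initial segment of length $n$ for $\mathrm{t<\mathsf{n}}$, equivalent to $\llfloor \mathfrak{M} \rrfloor, i \models \left\lfloor s' \right\rfloor_{\mathrm{prop}}$ for every $i < n$, which by the shift lemma is equivalent to $\mathfrak{M} \models \langle s'[i/\mathsf{i}] \rangle_n$ for every $i < n$, which by Definition~\ref{latex_lib_label_23} is exactly $\mathfrak{M} \models \bigwedge_{\mathsf{i}=0}^{\mathsf{n}-1} s'$. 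Combining (a), (b) and (c) and using the inverse described above gives the bijection, finishing the proof.
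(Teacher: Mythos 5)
Your proposal is correct and follows essentially the same route as the paper: the paper also proceeds by structural induction on the schema, and your ``shift lemma'' for iteration bodies is exactly its Lemma~\ref{latex_lib_label_7}, resting on the same observation that SPS inner indices have the form $\mathsf{i}+k$ and iterations cannot be nested. The only (harmless, arguably cleaner) difference is organizational: you first establish the interpretation-level bijection and then prove a single ``iff'' by induction, whereas the paper proves the two implications simultaneously and dispatches the negation case by contraposition via $\mathfrak{M}=\llfloor(\llfloor\mathfrak{M}\rrfloor)\rrfloor^{-1}$.
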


  This result is more interesting than just ``$s$ is satisfiable iff $\left\lfloor s\right\rfloor $ is satisfiable''. Indeed, not only does it provide
       more insights about the translation,
       but it also makes explicit the inverse transformation for interpretations,
       which is useful for model building.
  \begin{proof}
    Notice that $\llfloor .\rrfloor ^{-1}$ is well defined because every model of $\left\lfloor s\right\rfloor $ is an initial segment by Proposition \ref{latex_lib_label_31}.
          We still have to prove the following:
          \begin{enumerate}
      \item \label{latex_lib_label_79}for every model $\mathfrak{M}$ of $s$, $\llfloor \mathfrak{M}\rrfloor $ is a model of $\left\lfloor s\right\rfloor $;
      \item \label{latex_lib_label_80}for every model ${\sigma}$ of $\left\lfloor s\right\rfloor $, $\llfloor \sigma \rrfloor ^{-1}$ is a model of $s$.
    \end{enumerate}
          In the following, $\mathfrak{M}$ is a model of $s$, $n$ is the value given to $\mathsf{n}$ by $\mathfrak{M}$,
          ${\sigma}$ is a model of $\left\lfloor s\right\rfloor $ and $l$ is the length of $\left\lfloor s\right\rfloor $ (as an initial segment of $\mathrm{t<\mathsf{n}}$).
          Notice that, by definition, $\llfloor \mathfrak{M}\rrfloor $ coincides with the propositional part of $\mathfrak{M}$ on any propositional variable other that $\mathrm{t<\mathsf{n}}$.
          Similarly, ${\sigma}$ coincides with the propositional part of  $\llfloor \sigma \rrfloor ^{-1}$ on any propositional variable other that $\mathrm{t<\mathsf{n}}$.\par       We prove both properties simultaneously by induction on $\left\lfloor s\right\rfloor _{\mathrm{prop}}$:
          \begin{itemize}
      \item Suppose $s=p_{k}$, where $k\in \mathbb{N}$. 
              Then $\mathfrak{M}\models s$ implies $( p, k) \in \mathfrak{M}\subset \llfloor \mathfrak{M}\rrfloor $
              and by a straightforward induction on $k$ this implies that 
              $\llfloor \mathfrak{M}\rrfloor \models \mathrm{X}^{k}p$ which proves \ref{latex_lib_label_79}. 
              For \ref{latex_lib_label_80}, suppose that $\sigma \models \left\lfloor s\right\rfloor $, i.e.  $\sigma \models \mathrm{X}^{k}p$. 
              This easily entails that $( p, k) \in \sigma $, thus $( p, k) \in \llfloor \sigma \rrfloor ^{-1}$ and so $\llfloor \sigma \rrfloor ^{-1}\models p_{k}$.
      \item Suppose $s=p_{\mathsf{n}+k}$. 
              Then $\mathfrak{M}\models p_{\mathsf{n}+k}$ means that $( p, n+k) \in \mathfrak{M}\subset \llfloor \mathfrak{M}\rrfloor $.
              Consequently, $\mathrm{X}^{k}$$p$ is true at time $n$.
              Finally, by Proposition \ref{latex_lib_label_62}, $\mathrm{t=\mathsf{n}}$ is true only at time $n$ in $\llfloor \mathfrak{M}\rrfloor $.
              Thus, at any time when we have $\mathrm{t=\mathsf{n}}$, we have $\mathrm{X}^{k}$$p$;
              i.e. we have $\mathrm{t=\mathsf{n}}\Rightarrow \mathrm{X}^{k}p$ at any time;
              i.e. we have $\mathrm{G}( \mathrm{t=\mathsf{n}}\Rightarrow \mathrm{X}^{k}p) $.
              This proves \ref{latex_lib_label_79}.
              For \ref{latex_lib_label_80}, suppose $\sigma \models \left\lfloor s\right\rfloor $, i.e. $\sigma \models \mathrm{G}( \mathrm{t=\mathsf{n}}\Rightarrow \mathrm{X}^{k}p) $.
              Thus for every $t\in \mathbb{N}$, $\mathrm{t=\mathsf{n}}\Rightarrow \mathrm{X}^{k}p$ is true at time $t$ in ${\sigma}$.
              But we know that $\mathrm{t=\mathsf{n}}$ is true only at time $l$ (the length of $\left\lfloor s\right\rfloor $).
              Thus $p$ is true at time $l+k$.
              Hence $( p, l+k) \in \sigma $, thus $( p, l+k) \in \llfloor \sigma \rrfloor ^{-1}$, and since the value of $\mathsf{n}$ in $\llfloor \sigma \rrfloor ^{-1}$ is $l$, $\llfloor \sigma \rrfloor ^{-1}\models p_{\mathsf{n}+k}$.
      \item The case $s=p_{\mathsf{i}+k}$ is handled in the iteration cases (Lemma \ref{latex_lib_label_7}).
      \item Suppose $s=\neg s'$.
              For \ref{latex_lib_label_79}, if $\mathfrak{M}\models s$ then $\mathfrak{M}\not \models s'$. 
              But, as $\llfloor .\rrfloor ^{-1}$ is the inverse of $\llfloor .\rrfloor $, $\mathfrak{M}=\llfloor ( \llfloor \mathfrak{M}\rrfloor ) \rrfloor ^{-1}$.
              Thus $\llfloor ( \llfloor \mathfrak{M}\rrfloor ) \rrfloor ^{-1}\not \models s'$.
              By induction hypothesis, \ref{latex_lib_label_80} holds for $s'$, thus, by contraposition: $\llfloor \mathfrak{M}\rrfloor \not \models \left\lfloor s'\right\rfloor $.
              Consequently, $\llfloor \mathfrak{M}\rrfloor \models \neg \left\lfloor s'\right\rfloor $.
              For \ref{latex_lib_label_80}, suppose $\sigma \models \left\lfloor s\right\rfloor $, i.e. $\sigma \models \neg \left\lfloor s'\right\rfloor $. 
              Thus $\sigma \not \models \left\lfloor s'\right\rfloor $, i.e. $\llfloor \llfloor \sigma \rrfloor ^{-1}\rrfloor \not \models \left\lfloor s'\right\rfloor $.
              By induction hypothesis, \ref{latex_lib_label_79} holds for $s'$, so, by contraposition: $\llfloor \sigma \rrfloor ^{-1}\not \models s'$.
              Thus $\llfloor \sigma \rrfloor ^{-1}\models \neg s'$.
              \item Suppose $s=s_{1}\land s_{2}$.
              For \ref{latex_lib_label_79}, if $\mathfrak{M}\models s$ then $\mathfrak{M}\models s_{1}$ and $\mathfrak{M}\models s_{2}$ and one easily concludes by induction.
              For \ref{latex_lib_label_80}, if $\sigma \models \left\lfloor s\right\rfloor $, i.e. $\sigma \models \left\lfloor s_{1}\right\rfloor \land \left\lfloor s_{2}\right\rfloor $, then $\sigma \models \left\lfloor s_{1}\right\rfloor $ and $\sigma \models \left\lfloor s_{2}\right\rfloor $ and one can also conclude by induction.
      \item Suppose $s=\bigwedge _{\mathsf{i}=0}^{\mathsf{n}}s'$.
              We first prove the following intermediate lemma:\par         
      \begin{lemma2}
        \label{latex_lib_label_7}For every initial segment ${\sigma}$ of length $l$, and every $t\leq l$:
                $\sigma \models \left\lfloor s'[t/\mathsf{i}]\right\rfloor $ iff $\left\lfloor s'\right\rfloor $ holds in ${\sigma}$ at time $t$.
                \end{lemma2}
              \begin{proof}
        We prove both implications simultaneously by induction on the structure of $s'$:
                \begin{itemize}
          \item Suppose $s'=p_{\mathsf{i}+k}$;
                    thus $\left\lfloor s'\right\rfloor =\left\lfloor p_{\mathsf{i}+k}\right\rfloor =\mathrm{X}^{k}p$,
                    and $\left\lfloor s'[t/\mathsf{i}]\right\rfloor =\left\lfloor p_{t+k}\right\rfloor =\mathrm{X}^{t+k}p$.
                    For the first implication,
                    assume that $\sigma \models \left\lfloor s'[t/\mathsf{i}]\right\rfloor $, i.e. $\sigma \models \mathrm{X}^{t+k}p$, which is equivalent to $p\in \llfloor \mathfrak{M}\rrfloor ( t+k) $.
                    It is equivalent to say that $\mathrm{X}^{k}$$p$, i.e. $\left\lfloor s'\right\rfloor $, is true in ${\sigma}$ at time $t$.
                    This proves the first implication, and also the second as all reasoning steps are equivalences.
          \item As $s$ is strictly bound, there are no other base case (this is precisely why this restriction is essential).
          \item Suppose $s'=\neg s''$: then $\left\lfloor s'\right\rfloor =\neg \left\lfloor s''\right\rfloor $ and $\left\lfloor s'[t/\mathsf{i}]\right\rfloor =\neg \left\lfloor s''[t/\mathsf{i}]\right\rfloor $.\par           For the first implication, assume that $\sigma \models \left\lfloor s'[t/\mathsf{i}]\right\rfloor $, i.e. $\sigma \models \neg \left\lfloor s''[t/\mathsf{i}]\right\rfloor $.
                    Thus $\sigma \not \models \left\lfloor s''[t/\mathsf{i}]\right\rfloor $.
                    By the reverse implication of the induction hypothesis (more precisely by its contraposition),
                    this means that $\left\lfloor s''\right\rfloor $ does not hold in ${\sigma}$ at time $t$.
                    Consequently, ${\neg}$$\left\lfloor s''\right\rfloor $ holds in ${\sigma}$ at time $t$, hence the result.
                    Once again, the second implication is obtained by just reversing the reasoning.
          \item The proof for the  conjunction case is routine.
          \item As the schema is sequential, iterations cannot be nested, thus $s'$ cannot contain an iteration, hence there are no more cases.
                    \hfill${\Diamond}$
                    \end{itemize}
                \end{proof}
              Now we can get back to the iteration case of the main proof.
              For \ref{latex_lib_label_79}, if $\mathfrak{M}\models s$ then $\mathfrak{M}\models s'[t/\mathsf{i}]$ for every $t\in \mathbb{N}$ s.t. $0\leq t\leq n$ by definition of schemata semantics.
              And thus, by induction hypothesis, $\llfloor \mathfrak{M}\rrfloor \models \left\lfloor s'[t/\mathsf{i}]\right\rfloor $ for every such $t$.
              By Lemma \ref{latex_lib_label_7}, this means that $\left\lfloor s'\right\rfloor $ is true in $\llfloor \mathfrak{M}\rrfloor $ at any time $t$ s.t. $0\leq t\leq n$.
              From the semantics of LTL, it is obvious that $t\geq 0$
              so it is enough to say that $\left\lfloor s'\right\rfloor $ is true in $\llfloor \mathfrak{M}\rrfloor $ at any time $t\leq n$
              (notice that this would not be so simple if the schema was not simply iterated).
              This is equivalent to say that $\mathrm{t<\mathsf{n}}\Rightarrow \left\lfloor s'\right\rfloor $ is true at any time, hence the conclusion for \ref{latex_lib_label_79}.\par         For \ref{latex_lib_label_80}, suppose that $\sigma \models \left\lfloor s\right\rfloor $, i.e. $\sigma \models \mathrm{G}( \mathrm{t<\mathsf{n}}\Rightarrow \left\lfloor s\right\rfloor ) $.
              Then, by definition of LTL semantics, $\mathrm{t<\mathsf{n}}\Rightarrow \left\lfloor s\right\rfloor $ is true in ${\sigma}$ at any time.
              Furthermore $\mathrm{t<\mathsf{n}}$ is true only at time $l$ or below, hence $\left\lfloor s\right\rfloor $ is true in ${\sigma}$ at any time less or equal than $l$.
              We can then conclude using the reverse implication of Lemma \ref{latex_lib_label_7} and the semantics of schemata.
              {\qed}
              \end{itemize}
        \end{proof}
  \subsection{Consequences.}We then obviously have the expected result:
  \begin{corollary2}
    A SPS $s$ is satisfiable iff $\left\lfloor s\right\rfloor $ is satisfiable.
  \end{corollary2}
   Thus we indeed obtained an embedding of  SPS into LTL. Consequently we can use any LTL satisfiability solver to solve
      the satisfiability problem for SPS:
      we simply translate the input schema to LTL with $\left\lfloor .\right\rfloor $ and then launch the LTL solver on the output formula.
      Thus: 
  \begin{corollary2}
    The satisfiability problem for SPS can be reduced to the satisfiability problem for LTL.
  \end{corollary2}
      Notice furthermore that if the solver finds a model, then we can translate it back to a schema model
      using the inverse translation $\llfloor .\rrfloor ^{-1}$.\par     We can easily study the complexity of this transformation.
      For an object $x$ (schema, formula, arithmetic expression), let $\#x$ stand for the size of $x$ in number of symbols.
      Let $\#s$ denote the size of a schema $s$, in number of symbols,
      and let $\#_{\mathrm{int}}s$ denote the size of the biggest number occurring in $s$, expressed w.r.t. the size of $s$.
      This is to take into account the fact that numbers can be encoded either in unary or in binary:
      if they are encoded in binary then $\#_{\mathrm{int}}s=O( 2^{\#s}) $,
      but if they are encoded in unary then $\#_{\mathrm{int}}s=O( \#s) $.
      It may also happen that we consider only schemata whose biggest number is bounded by some constant; in such a case, we have $\#_{\mathrm{int}}s=O( 1) $.
      This case is worth considering since we may easily increase the size of a schema without increasing the numbers that occur in it.
      Then:
      \begin{proposition2}
    For every SPS $s$, we have $\#\left\lfloor s\right\rfloor =O( \#s.\#_{\mathrm{int}}s) $.
  \end{proposition2}
      \begin{proof}
    First, ${\phi }_{<}^{\mathrm{t<\mathsf{n}}}$ has a constant size.
        Then since the construction of ${\left\lfloor s\right\rfloor }_{\mathrm{prop}}$ is by induction on $s$, there are $\#s$ recursive calls.
        Each of those calls adds a number of symbols either constant or 
        proportional to some $k\in \mathbb{N}$ occurring in $s$ (all the cases with
        ``$\times k$''), i.e. at worst $\#_{\mathrm{int}}s$.{\qed}
  \end{proof}
      Consequently, $\left\lfloor .\right\rfloor $ is: 
      \begin{itemize}
    \item linear if numbers are bounded by constants; 
    \item quadratic if numbers are encoded in unary; 
    \item exponential if they are encoded in binary.
  \end{itemize}
      It is well-known that the satisfiability of LTL is in PSPACE \cite{ltl-complexity}, thus:
      \begin{theorem}
    \label{latex_lib_label_34}The satisfiability of SPS is in PSPACE  if numbers are encoded in unary or bounded
        by constants. It is in EXPSPACE if numbers are encoded in binary.
  \end{theorem}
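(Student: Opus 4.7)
The plan is to assemble the final complexity result from three ingredients that are already in place: the satisfiability-preserving translation $\lfloor . \rfloor$ (the Corollary stating that $s$ is satisfiable iff $\lfloor s \rfloor$ is), the size bound $\#\lfloor s \rfloor = O(\#s \cdot \#_{\mathrm{int}}s)$ just established, and Sistla–Clarke's classical PSPACE upper bound for LTL satisfiability.

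First I would observe that, given an SPS $s$, deciding its satisfiability reduces to deciding the satisfiability of the LTL formula $\lfloor s \rfloor$, and this reduction is computable in time polynomial in $\#\lfloor s \rfloor$ (the translation is a straightforward structural recursion, producing at most $\#_{\mathrm{int}}s$ symbols per recursive call). Thus any algorithm for LTL satisfiability running in space $f(N)$ on inputs of size $N$ yields an algorithm for SPS satisfiability running in space $f(O(\#s \cdot \#_{\mathrm{int}}s))$, up to the logarithmic overhead of composing space-bounded reductions with space-bounded deciders (which is absorbed into the polynomial since PSPACE is closed under polynomial-space reductions).

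Then I would split by encoding. If numbers in $s$ are bounded by a constant or encoded in unary, then $\#_{\mathrm{int}}s = O(\#s)$, so $\#\lfloor s \rfloor = O(\#s^2)$; feeding a polynomial-size formula to a PSPACE LTL decider yields a PSPACE procedure for SPS. If numbers are encoded in binary, then $\#_{\mathrm{int}}s = O(2^{\#s})$, so $\#\lfloor s \rfloor = O(\#s \cdot 2^{\#s}) = 2^{O(\#s)}$; PSPACE on an input of size $2^{O(\#s)}$ is $\mathrm{PSPACE}(2^{O(\#s)}) \subseteq \mathrm{EXPSPACE}$ with respect to $\#s$, which is exactly the claimed bound.

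There is essentially no obstacle here; the only mild subtlety to mention is that one does not want to first construct $\lfloor s \rfloor$ explicitly in the binary case (since it has exponential size) and then pass it to the LTL decider as a black box that needs the whole input on its tape. This is harmless: an LTL PSPACE decider only needs to inspect subformulae of its input, and each subformula of $\lfloor s \rfloor$ can be recomputed on demand from $s$ in polynomial space by replaying the translation, so the composed algorithm fits in $2^{O(\#s)}$ space overall. Once this is noted, the theorem follows immediately.
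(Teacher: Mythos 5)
Your proposal is correct and follows essentially the same route as the paper: the theorem is obtained there by combining the satisfiability-preserving translation $\left\lfloor .\right\rfloor$, the size bound $\#\left\lfloor s\right\rfloor =O( \#s.\#_{\mathrm{int}}s)$ with its three cases (linear, quadratic, exponential depending on the encoding), and the classical PSPACE bound for LTL satisfiability. Your closing remark about recomputing subformulae on demand in the binary case is harmless but not needed, since EXPSPACE already affords enough room to write out the exponential-size formula explicitly.
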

       This result improves over the one of \cite{lata2010}, 
      where the satisfiability of \emph{regular} schemata is proved to be in EXPSPACE (resp. $2$-EXPSPACE),
      if numbers are encoded in unary (resp. binary).
      Of course Theorem \ref{latex_lib_label_34} only deals with sequential schemata,
      but both classes are close enough so that we conjecture that the satisfiability of regular schemata is also in PSPACE.\section{Embedding LTL in SPS}\label{latex_lib_label_66}We now tackle the reverse embedding, i.e. we translate LTL to SPS.\subsection{A first \emph{faulty} translation: finiteness vs infiniteness.}We provide a first, intuitive but faulty, translation:
        \begin{definition2}
    \label{latex_lib_label_25}Let ${\phi}$ be an LTL formula.
          Then $\lceil \phi \rceil $ is a schema defined as $\lceil \phi \rceil _{0}$, where $\lceil \phi \rceil _{e}$ is
          inductively defined for any expression $e$ as follows:
          \begin{displaymath}
      \begin{aligned}
        \lceil \top \rceil _{e} & \stackrel{{\mbox{\tiny{def}}}}{=}\top \\
        \lceil p\rceil _{e} & \stackrel{{\mbox{\tiny{def}}}}{=}p_{e}\\
        \lceil \neg \phi \rceil _{e} & \stackrel{{\mbox{\tiny{def}}}}{=}\neg \lceil \phi \rceil _{e}\\
        \lceil \phi _{1}\land \phi _{2}\rceil _{e} & \stackrel{{\mbox{\tiny{def}}}}{=}\lceil \phi _{1}\rceil _{e}\land \lceil \phi _{2}\rceil _{e}\\
        \lceil \mathrm{X}\phi \rceil _{e} & \stackrel{{\mbox{\tiny{def}}}}{=}\lceil \phi \rceil _{e+1}\\
        \lceil \phi _{1}\mathrm{U}\phi _{2}\rceil _{e} & \stackrel{{\mbox{\tiny{def}}}}{=}{\displaystyle \bigvee _{\mathsf{i}=e}^{\mathsf{n}}\left( \lceil \phi _{2}\rceil _{\mathsf{i}}\land \bigwedge _{\mathsf{j}=e}^{\mathsf{i}-1}\lceil \phi _{1}\rceil _{\mathsf{j}}\right) }\\
      \end{aligned}
    \end{displaymath}
          \end{definition2}
        But this is not satisfactory since the obtained schema is not sequential\footnote{Actually this is not even a schema in the sense of Definition \ref{latex_lib_label_72}
        since the upper bounds of iterations are different from $\mathsf{n}-1$.
        Notice that this is neither a regular schema \cite{tab09} since iterations
        are nested and the upper bound of one iteration contains a bound variable.}
        and, more important, because a valid LTL formula can be translated into a non-valid schema as shows the following example:
        \begin{example2}
    \begin{displaymath}
      \lceil \mathrm{X}p\Rightarrow \mathrm{F}p\rceil =p_{1}\Rightarrow \bigvee _{\mathsf{i}=0}^{\mathsf{n}}p_{\mathsf{i}}
    \end{displaymath}
                The formula $\mathrm{X}p\Rightarrow \mathrm{F}p$ is valid, but the schema $p_{1}\Rightarrow \bigvee _{\mathsf{i}=0}^{\mathsf{n}}p_{\mathsf{i}}$ is not valid (take any interpretation where $\mathsf{n}=0$).
          Adding conditions ensuring that $\mathsf{n}$ is strictly positive is possible, but obviously not sufficient,
          e.g. we could consider the formula $\mathrm{X}^{k}p\Rightarrow \mathrm{F}p$.
          Then the above translation will work only if $\mathsf{n}\geq k$ (where $k$ is arbitrary).
  \end{example2}
        The deep reason of this problem is that the semantics of schemata are intrinsically \emph{finite}
        (though unbounded) whereas those of LTL are \emph{infinite}. Actually, we can consider the previous translation as an indirect way to
        define ``finite semantics LTL'', i.e. LTL formulae interpreted over
        functions from $\{ 1, \mbox{\dots }, n\} $ to $2^{\mathcal{P}}$ for any $n\in \mathbb{N}$.
        As explained in the Introduction, LTL with finite semantics has been studied in the contexts of planning
        and runtime verification
        \cite{finite-ltl-original,ltl-finite-traces,finite-ltl-review}.
        But it seems that, rather than considering finite traces per se, the
        preferred approach in those fields is to turn finite traces into infinite
        ones by repeating infinitely the last state.
        Then the usual semantics of LTL can be used.
        Both systems seem however very similar.\subsection{A successful translation into \emph{non-SPS}}We actually need the ultimately periodic model property (Definition \ref{latex_lib_label_2} and Theorem \ref{latex_lib_label_1})
        to obtain a successful translation, written $\lceil .\rceil $, of LTL formulae into SPS.
        The aim of $\lceil .\rceil $ is that for every model ${\sigma}$ of an LTL formula ${\phi}$,
        the interpretation $\llceil \sigma \rrceil $ (Definition \ref{latex_lib_label_53}) is a model of $\lceil \phi \rceil $.
        An example is provided on Figure \ref{latex_lib_label_30}.
  \begin{figure}[h]
    \input{exspec2.tex}\caption{Specification of $\lceil .\rceil $: example.}\label{latex_lib_label_30}
  \end{figure}

  Consider an LTL formula ${\phi}$.
        As we shall see, we will make use of the schema ${\mathrm{s}}_{\leq }^{\mathrm{pfx}}$ (Proposition \ref{latex_lib_label_35})
        to enforce the fact that every model of $\lceil \phi \rceil $ is a $2$-initial segment.
        As already exposed, this $2$-initial segment is intended to denote a UP interpretation of prefix index $k$ and period $l$
        (and the parameter $\mathsf{n}$ is assigned the value $k+l-1$).
        Then the translation of ${\phi}$ (or its subformulae) will be parametrized by an arithmetic expression $e$ intended to denote the time
        (it may be either a natural number or a variable, when translating a subformula of an iteration, and it is initially equal to $0$).
        This instant will of course have an influence on the translation.
        In particular it is important to know if this instant lies in the prefix of a UP model or in its loop.
        For the prefix, we already have the propositional variable $\mathrm{pfx}$, which is specified by ${\mathrm{s}}_{\leq }^{\mathrm{pfx}}$.
        But we need to introduce a new variable for the loop, say ``$\mathrm{loop}_{e}$'', that would be true iff $e$ belongs to the loop.
        By definition, this is the case when $e\in \{ k, \mbox{\dots }, k+l-1\} $, thus we have to check that $e\geq k$ and that $e\leq \mathsf{n}$.
        By definition, the first property holds iff $\mathrm{pfx}_{e}$ does not hold.
        We thus need to express $e\leq \mathsf{n}$ with a schema, this is done as follows:\par       
  \begin{proposition2}
    \label{latex_lib_label_58}Let $( \sigma , n) $ be a schema interpretation and $e$ be a Presburger expression.
          Then $\sigma \models \langle \bigvee _{\mathsf{i}=e}^{\mathsf{n}}\top \rangle _{n}$ iff $e[n/\mathsf{n}]\leq n$.
  \end{proposition2}
  \par       
  \begin{proof}
    Indeed if $e[n/\mathsf{n}]>n$ then the iteration is empty,
          thus $\langle \bigvee _{\mathsf{i}=e}^{\mathsf{n}}\top \rangle _{n}=\bot $,
          hence cannot be satisfied by ${\sigma}$.
          Otherwise, if $e[n/\mathsf{n}]\leq n$ then the iteration is non empty,
          thus $\langle \bigvee _{\mathsf{i}=e}^{\mathsf{n}}\top \rangle _{n}$ is a non empty disjunction of ${\top}$,
          thus equivalent to ${\top}$, hence necessarily satisfied by ${\sigma}$.{\qed}
  \end{proof}
  \par       
        Thus we define $\mathrm{loop}_{e}$ as 
        follows: 
  \begin{displaymath}
    \mathrm{loop}_{e}\stackrel{{\mbox{\tiny{def}}}}{=}\neg \mathrm{pfx}_{e}\land \bigvee _{\mathsf{i}=e}^{\mathsf{n}}\top 
  \end{displaymath}
              \begin{definition2}
    \label{latex_lib_label_38}Let ${\phi}$ be an LTL formula, then $\lceil \phi \rceil $ is a schema defined as
          $\lceil \phi \rceil \stackrel{{\mbox{\tiny{def}}}}{=}\lceil \phi \rceil _{0}\land {\mathrm{s}}_{\leq }^{\mathrm{pfx}}\land \mathrm{Ax}_{\mathsf{i}=k}$\footnote{${\mathrm{s}}_{\leq }^{\mathrm{pfx}}$ is defined in Proposition \ref{latex_lib_label_35} and $\mathrm{Ax}_{\mathsf{i}=k}$ is defined before Proposition \ref{latex_lib_label_73}.}
          where, for every arithmetic expression $e$, $\lceil \phi \rceil _{e}$ is inductively defined as follows:
          \begin{displaymath}
      \begin{aligned}
        \lceil \top \rceil _{e} & \stackrel{{\mbox{\tiny{def}}}}{=}\top \\
        \lceil p\rceil _{e} & \stackrel{{\mbox{\tiny{def}}}}{=}p_{e}\\
        \lceil \neg \phi \rceil _{e} & \stackrel{{\mbox{\tiny{def}}}}{=}\neg \lceil \phi \rceil _{e}\\
        \lceil \phi _{1}\land \phi _{2}\rceil _{e} & \stackrel{{\mbox{\tiny{def}}}}{=}\lceil \phi _{1}\rceil _{e}\land \lceil \phi _{2}\rceil _{e}\\
        \lceil \mathrm{X}\phi \rceil _{e} & \stackrel{{\mbox{\tiny{def}}}}{=}( e<\mathsf{n}\land \lceil \phi \rceil _{e+1}) \lor ( e=\mathsf{n}\land {\displaystyle \bigwedge _{\mathsf{i}=0}^{\mathsf{n}}( \mathsf{i}=k\Rightarrow \lceil \phi \rceil _{\mathsf{i}}) }) \\
        \lceil \phi _{1}\mathrm{U}\phi _{2}\rceil _{e} & \stackrel{{\mbox{\tiny{def}}}}{=}\mbox{\raisebox{-0.600000cm}{$
        \begin{array}{l}
          {\displaystyle \bigvee _{\mathsf{i}=e}^{\mathsf{n}}\left( \bigwedge _{\mathsf{j}=e}^{\mathsf{i}-1}\lceil \phi _{1}\rceil _{\mathsf{j}}\land \lceil \phi _{2}\rceil _{\mathsf{i}}\right) }\lor \\
          {\displaystyle \left( \mathrm{loop}_{e}\land \bigwedge _{\mathsf{j}=e}^{\mathsf{n}}\lceil \phi _{1}\rceil _{\mathsf{j}}\land \bigvee _{\mathsf{i}=0}^{e}\left( \mathrm{loop}_{\mathsf{i}}\land \bigwedge _{\mathsf{j}=0}^{\mathsf{i}-1}( \mathrm{loop}_{\mathsf{j}}\Rightarrow \lceil \phi _{1}\rceil _{\mathsf{j}}) \land \lceil \phi _{2}\rceil _{\mathsf{i}}\right) \right) }\\
        \end{array}
        $}}\\
      \end{aligned}
    \end{displaymath}
  \end{definition2}
  \begin{example2}
    In the cases of $\mathrm{F}$ and $\mathrm{G}$, the translation simplifies drastically.
          For instance (some simple simplifications have been made):
          \begin{displaymath}
      \lceil \mathrm{F}p\rceil _{0}=\bigvee _{\mathsf{i}=0}^{\mathsf{n}}p_{\mathsf{i}}
    \end{displaymath}
          \begin{displaymath}
      \lceil \mathrm{G}p\rceil _{0}=\bigwedge _{\mathsf{i}=0}^{\mathsf{n}}p_{\mathsf{i}}
    \end{displaymath}
          This is not so simple if we consider a time $t>0$:
          \begin{displaymath}
      \lceil \mathrm{F}p\rceil _{t}=\bigvee _{\mathsf{i}=t}^{\mathsf{n}}p_{\mathsf{i}}\lor \left( \mathrm{loop}_{t}\land \bigvee _{\mathsf{i}=0}^{t}( \mathrm{loop}_{\mathsf{i}}\land p_{\mathsf{i}}) \right) 
    \end{displaymath}
          \begin{displaymath}
      \lceil \mathrm{G}p\rceil _{t}=\bigwedge _{\mathsf{i}=t}^{\mathsf{n}}\neg p_{\mathsf{i}}\land \left( \neg \mathrm{loop}_{t}\lor \bigwedge _{\mathsf{i}=0}^{t}( \mathrm{loop}_{\mathsf{i}}\Rightarrow \neg p_{\mathsf{i}}) \right) 
    \end{displaymath}
          \end{example2}

  We provide some intuitions on the transformation corresponding to the $\mathrm{X}$ and $\mathrm{U}$ connectives.
        First, for the $\mathrm{X}$:
        when computing the next instant, one has to take into account the fact that we want a UP interpretation.
        Thus if $e=\mathsf{n}$ the next time after $e$ is not $e+1$ but $k$, where $k$ is the prefix index.
        This prefix index can be specified as the only index $\mathsf{i}$ such as $\mathrm{loop}_{\mathsf{i}}\land \mathrm{loop}_{\mathsf{i}-1}$ holds.
        Second, for the $\mathrm{U}$, the first disjunct is very natural:
        it corresponds to the typical case, for instance when time $e$ occurs \emph{before} the loop.
        Then, according to the definition of the semantics of $\mathrm{U}$,
        we only have to check that $\phi _{1}$ holds on some interval $\{ e, \mbox{\dots }, \mathsf{i}-1\} $ and then that $\phi _{2}$ holds at instant $\mathsf{i}$.
        In general $\mathsf{i}$ may be arbitrary, but since the interpretation is UP,
        we can restrict to the case where $\mathsf{i}$ is in the interval $\{ e, \mbox{\dots }, k+l-1\} $, i.e. $\mathsf{i}\leq \mathsf{n}$.
              The second disjunct is slightly more complex.
        It corresponds to the case where $e$ occurs inside the periodic part of the interpretation.
        In this case, the element $\mathsf{i}$ such that $\phi _{2}$ holds may occur before $e$.
        Then $\phi _{1}\mathrm{U}\phi _{2}$ also holds if $\phi _{1}$ holds from $e$ to the end of the loop, i.e. $\mathsf{n}$,
        and then holds again when we ``get back'' at the beginning of the loop, i.e. from $k$ to some $\mathsf{i}-1$, with $\phi _{2}$ holding at $\mathsf{i}\leq e$.
        Since $\mathsf{i}\in \{ 0, \mbox{\dots }, e\} $, this can be easily stated as an iterated disjunction.
        The fact that $\mathsf{i}\geq k$ is encoded by stating that $\mathrm{loop}_{\mathsf{i}}$ must hold (i.e. $\mathsf{i}$ must be inside the periodic part of the interpretation).
  \begin{remark2}
    This transformation might remind the reader of some formulae encountered
          when dealing with the path model checking problem for UP interpretations \cite{goranko}.
          This resemblance can be explained by observing that every model of
          ${\mathrm{s}}_{\leq }^{\mathrm{pfx}_{ }}$ is a UP path, and $\lceil \phi \rceil _{e}$ is the operation of model checking the specified path.
          Then, as ${\mathrm{s}}_{\leq }^{\mathrm{pfx}_{ }}$ specifies all UP paths, we actually model check all possible models,
          hence the fact that we can conclude about the satisfiability.
  \end{remark2}

  This transformation is sound and complete but the resulting schema is not sequential 
        (iterations are nested and their bounds are different from $0$ and $\mathsf{n}-1$).
        Consequently, we present another translation in the next section, which will indeed fall in the class of SPS.\subsection{A successful translation into \emph{SPS}}The following translation follows more or less the same goal as the previous one: 
        for every model ${\sigma}$ of an LTL formula ${\phi}$, the interpretation $\llceil \sigma \rrceil $ shall be a model of $\lceil \phi \rceil $.
        Hence it relies again on the UP property.
        This new transformation uses a structure-preserving approach:
        for each subformula ${\phi}$ (different from an indexed proposition) of the original formula,
        we introduce a fresh propositional variable written $\left| \phi \right| _{ }$.
        For an indexed proposition $p$, $\left| p\right| _{ }\stackrel{{\mbox{\tiny{def}}}}{=}p$.
        Each indexed propositional variable $\left| \phi \right| _{\mathsf{i}}$, $0\leq \mathsf{i}\leq \mathsf{n}$, is then intended to be true iff the subformula ${\phi}$ is true at time $\mathsf{i}$.
        Formally, we extend $\llceil .\rrceil $ as follows:
  \begin{definition2}
    \label{latex_lib_label_61}Let ${\sigma}$ be a UP interpretation and ${\phi}$ an LTL formula. Then:
          \begin{itemize}
      \item for every propositional variable of the form $\left| \psi \right| _{ }$ for some subformula ${\psi}$ of ${\phi}$, $( \left| \psi \right| _{ }, t) \in \llceil \sigma \rrceil $ iff $\sigma , t\models \psi $; 
      \item for every other variable, $\llceil \sigma \rrceil $ is defined as described early on.
    \end{itemize}
    \par       Furthermore, for each subformula of the form $\phi _{1}\mathrm{U}\phi _{2}$,
          we add another propositional variable called $\left| \phi _{1}\mathrm{U'}\phi _{2}\right| _{ }$ 
          (called this way because its behaviour is very close to the one of $\mathrm{U}$)
          interpreted as $\mathrm{true}$ at $t\in \mathbb{N}$ iff
          there is $t'\in \mathbb{N}$ s.t. $t\leq t'\leq k+l-1$ where $\phi _{1}$ holds between $t$ and $t'-1$ and $\phi _{2}$ holds at $t'$,
          i.e. the semantics are the same as for $\mathrm{U}$ 
          except that the instant when $\phi _{2}$ occurs must happen before the end of the loop
          (as explained thereafter, this variable is used to ensure that the eventuality indeed happens).
  \end{definition2}
  Note that this semantic transformation now depends on the formula to translate. The inverse operation is defined as in Definition \ref{latex_lib_label_59}
        except that the value of any variable $\left| \psi \right| _{ }$ is ``forgotten''.
        \par 
        The translation is done by adding axioms to compute the values of the newly introduced propositional variables 
        (relating these values to the ones of the  propositional variables originally occurring in the formula).
        As we shall see, the specification of those new variables is straightforward when the head symbol of the subformula is a boolean connective:
        the value of the considered variable can be directly related to the values of the variables corresponding to the operands,
        see definition of $\mathrm{Ax}_{\neg \phi }$ and $\mathrm{Ax}_{\phi _{1}\land \phi _{2}}$ in Definition \ref{latex_lib_label_38} below.\par       When the head symbol of the subformula is a temporal connective, we have to distinguish whether the index denotes a time lower or equal to $\mathsf{n}$
        (since the interpretation is UP, we only have to consider the time interval $\{ 0, \mbox{\dots }, \mathsf{n}\} $).
        In both cases, the value of the considered propositional variable $\left| \phi \right| _{ }$ 
        at time $\mathsf{i}$ is related to the one of the variables \emph{at the next instant}.
        If $\mathsf{i}<\mathsf{n}$ then this next instant is easy to compute: it is simply $\mathsf{i}+1$.
        But if $\mathsf{i}=\mathsf{n}$, since the value of the variables $\left| \phi \right| _{ }$ are specified only on the interval 
        $\{ 0, \mbox{\dots }, \mathsf{n}\} $ we cannot refer to the time $\mathsf{n}+1$ and we have to take advantage of the fact that the interpretation is periodic:
        since $\mathsf{n}$ necessarily corresponds to the end of the periodic part, the next instant must be the beginning of the loop.
        This is easily handled in the $\mathrm{X}$ case:
        if we have $\mathrm{X}\phi $ at time $\mathsf{n}$ then we must have ${\phi}$ at time $k$ where $k$ is the beginning of the loop.\par       In the $\mathrm{U}$ case, if we have $\phi _{1}\mathrm{U}\phi _{2}$ at time $\mathsf{n}$ then we have to deal with the fact that $\phi _{2}$ might hold after $\mathsf{n}$,
        between time $k$ and $\mathsf{n}-1$ (by taking the loop into account).
        In this case we have to check that $\phi _{2}$ holds between $k$ and $\mathsf{n}-1$, and that $\phi _{1}$ holds in between.
        This check is triggered by the use of the new connective $\mathrm{U'}$,
        whose specification is thus added to the definition.
        Intuitively, $\phi _{1}\mathrm{U'}\phi _{2}$ may be seen as a connective interpreted as $\phi _{1}\mathrm{U}\phi _{2}$,
        except that the formula $\phi _{2}$ must hold \emph{at the latest} at time $\mathsf{n}$
        (one may wonder why not use directly $\mathrm{U}$ instead of $\mathrm{U'}$;
        but this would yield an ill-founded definition: 
        the eventuality could be always delayed and never fulfilled).
  \begin{definition2}
    \label{latex_lib_label_38}Let ${\phi}$ be an LTL formula. Then $\lceil \phi \rceil $ is the schema defined as
          $\lceil \phi \rceil \stackrel{{\mbox{\tiny{def}}}}{=}\left| \phi \right| _{0}\land \Phi ^{\phi }\land {\mathrm{s}}_{\leq }^{\mathrm{pfx}}\land \mathrm{Ax}_{\mathsf{i}=k}$
          where  $\Phi ^{\phi }$ stands for $\bm{\bigwedge }\{ \mathrm{Ax}_{\psi }\mid \mbox{${\psi}$ is a subformula of ${\phi}$}\} $
          and $\mathrm{Ax}_{\psi }$ is defined as follows:
          \begin{displaymath}
      \begin{aligned}
        {\mathrm{Ax}_{\top }} & \stackrel{{\mbox{\tiny{def}}}}{=}{\displaystyle \bigwedge _{\mathsf{i}=0}^{\mathsf{n}}\left| \top \right| _{\mathsf{i}}}\\
        \mathrm{Ax}_{\neg \phi } & \stackrel{{\mbox{\tiny{def}}}}{=}{\displaystyle \bigwedge _{\mathsf{i}=0}^{\mathsf{n}}( \left| \neg \phi \right| _{\mathsf{i}}\Leftrightarrow \neg \left| \phi \right| _{\mathsf{i}}) }\\
        \mathrm{Ax}_{\phi _{1}\land \phi _{2}} & \stackrel{{\mbox{\tiny{def}}}}{=}{\displaystyle \bigwedge _{\mathsf{i}=0}^{\mathsf{n}}( \left| \phi _{1}\land \phi _{2}\right| _{\mathsf{i}}\Leftrightarrow \left| \phi _{1}\right| _{\mathsf{i}}\land \left| \phi _{2}\right| _{\mathsf{i}}) }\\
        \mathrm{Ax}_{\mathrm{X}\phi } & \stackrel{{\mbox{\tiny{def}}}}{=}{\displaystyle \bigwedge _{\mathsf{i}=0}^{\mathsf{n}-1}( \left| \mathrm{X}\phi \right| _{\mathsf{i}}\Leftrightarrow \left| \phi \right| _{\mathsf{i}+1}) \land ( \left| \mathrm{X}\phi \right| _{\mathsf{n}}\Leftrightarrow \bigwedge _{\mathsf{i}=0}^{\mathsf{n}}( \mathsf{i}=k\Rightarrow \left| \phi \right| _{\mathsf{i}}) ) }\\
      \end{aligned}
    \end{displaymath}
    \begin{displaymath}
      \begin{aligned}
        \mathrm{Ax}_{\phi _{1}\mathrm{U}\phi _{2}} & \stackrel{{\mbox{\tiny{def}}}}{=}\mbox{\raisebox{-1.100000cm}{$
        \begin{array}{l}
          {\displaystyle \bigwedge _{\mathsf{i}=0}^{\mathsf{n}-1}( \left| \phi _{1}\mathrm{U}\phi _{2}\right| _{\mathsf{i}}\Leftrightarrow \left| \phi _{2}\right| _{\mathsf{i}}\lor ( \left| \phi _{1}\right| _{\mathsf{i}}\land \left| \phi _{1}\mathrm{U}\phi _{2}\right| _{\mathsf{i}+1}) ) }\\[0.200000cm]
          {\displaystyle \land ( \left| \phi _{1}\mathrm{U}\phi _{2}\right| _{\mathsf{n}}\Leftrightarrow ( \left| \phi _{2}\right| _{\mathsf{n}}\lor ( \left| \phi _{1}\right| _{\mathsf{n}}\land \bigwedge _{\mathsf{i}=0}^{\mathsf{n}}( \mathsf{i}=k\Rightarrow \left| \phi _{1}\mathrm{U'}\phi _{2}\right| _{\mathsf{i}}) ) ) ) }\\[0.200000cm]
          {\displaystyle \land \bigwedge _{\mathsf{i}=0}^{\mathsf{n}-1}( \left| \phi _{1}\mathrm{U'}\phi _{2}\right| _{\mathsf{i}}\Leftrightarrow \left| \phi _{2}\right| _{\mathsf{i}}\lor ( \left| \phi _{1}\right| _{\mathsf{i}}\land \left| \phi _{1}\mathrm{U'}\phi _{2}\right| _{\mathsf{i}+1}) ) }\\[0.200000cm]
          {\displaystyle \land ( \left| \phi _{1}\mathrm{U'}\phi _{2}\right| _{\mathsf{n}}\Leftrightarrow \left| \phi _{2}\right| _{\mathsf{n}}) }\\[0.200000cm]
        \end{array}
        $}}\\
      \end{aligned}
    \end{displaymath}
          where $\bigwedge _{\mathsf{i}=0}^{\mathsf{n}}s$ is a shortcut for $\bigwedge _{\mathsf{i}=0}^{\mathsf{n}-1}s\land s[\mathsf{n}/\mathsf{i}]$ 
          (we need to define this as an abbreviation so that the schema be indeed sequential).
          \end{definition2}
  \begin{lemma2}
    \label{latex_lib_label_60}Let ${\phi}$ be an LTL formula and $n\in \mathbb{N}$.
          The instance of $\lceil \phi \rceil $ w.r.t. $n$ contains only variables whose index is comprised between $0$ and $n$.
  \end{lemma2}
  \begin{proof}
    By inspection of all cases in Definition \ref{latex_lib_label_38},
          all indices of propositional variables in $\lceil \phi \rceil $ are $\mathsf{n}$, $\mathsf{i}$ or $\mathsf{i}+1$.
          $\mathsf{i}$ is always bound by an iteration whose bounds are $0$ and $\mathsf{n}-1$.
          Consequently the instance of $\lceil \phi \rceil $ w.r.t. $n$ only contain indexed propositional variables whose indices are between $0$ and $n$.{\qed}
  \end{proof}
  \begin{theorem}
    \label{latex_lib_label_13}Let ${\phi}$ be an LTL formula. 
          Then $\llceil .\rrceil $ is a bijection between the UP models of ${\phi}$ and the models of $\lceil \phi \rceil $
          (if the latter are restricted to the values of propositional variables occurring in the corresponding instance of $\lceil \phi \rceil $).
          $\llceil .\rrceil ^{-1}$ is the inverse bijection.
  \end{theorem}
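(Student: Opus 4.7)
The plan is to prove the bijection by reducing both directions to a semantic correspondence between the auxiliary variables $|\psi|$ introduced by the translation and the subformulae $\psi$ they are meant to encode. First, every model $\mathfrak{M}$ of $\lceil\phi\rceil$ satisfies $\mathrm{s}_{\leq}^{\mathrm{pfx}}$ and is therefore a $2$-initial segment for $\mathrm{pfx}$ by Proposition \ref{latex_lib_label_35}, which makes $\llceil\cdot\rrceil^{-1}$ from Definition \ref{latex_lib_label_59} well-defined on it. Conversely $\llceil\sigma\rrceil$ is always a $2$-initial segment for $\mathrm{pfx}$ (with short length $k$ and long length $k+l$), and Propositions \ref{latex_lib_label_35} and \ref{latex_lib_label_73} ensure that $\mathrm{s}_{\leq}^{\mathrm{pfx}} \land \mathrm{Ax}_{\mathsf{i}=k}$ is satisfied. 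It thus suffices to match $|\phi|_0 \land \Phi^{\phi}$ against $\sigma \models \phi$ for UP $\sigma$.

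For the forward direction, fix a UP model $\sigma$ with prefix index $k$, period $l$, and let $n = k+l-1$. By Definition \ref{latex_lib_label_61}, $\llceil\sigma\rrceil$ is built so that $|\psi|_t$ holds iff $\sigma, t \models \psi$ and so that $|\phi_1\mathrm{U'}\phi_2|_t$ holds iff $\phi_1\mathrm{U}\phi_2$ admits a witness within $\{t,\dots,n\}$; hence $|\phi|_0$ is automatic. I would then verify each $\mathrm{Ax}_{\psi}$ by case analysis on the head symbol of $\psi$: the Boolean cases and the $\mathrm{X}$ case for $i < n$ follow immediately from the LTL semantic clauses, while the $\mathrm{X}$ case at $i = n$ uses the UP identity $\sigma(n+1) = \sigma(k)$ together with Proposition \ref{latex_lib_label_73} to justify replacing the next instant by $k$.

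The main obstacle will be the $\mathrm{U}$ case at $i = n$, which is precisely why the auxiliary $\mathrm{U'}$ has been introduced. For $i < n$, both the $\mathrm{U}$ and $\mathrm{U'}$ axioms express the classical expansion $\phi_1\mathrm{U}\phi_2 \equiv \phi_2 \lor (\phi_1 \land \mathrm{X}(\phi_1\mathrm{U}\phi_2))$, which matches the LTL semantics directly. At $i = n$, if $\sigma, n \models \phi_1\mathrm{U}\phi_2$ and $\sigma, n \not\models \phi_2$, then $\sigma, n \models \phi_1$ and a witness $t' > n$ exists for $\phi_2$; by choosing $t'$ minimal and applying UP periodicity, this witness folds back to some $t'' \in \{k,\dots,n\}$ with $\sigma, t'' \models \phi_2$ and $\phi_1$ holding on $\{k,\dots,t''-1\}$, which is exactly the content of $|\phi_1\mathrm{U'}\phi_2|_k$. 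The essential subtlety is that the $\mathrm{U'}$ axioms form a well-founded backward recursion anchored at the base case $|\phi_1\mathrm{U'}\phi_2|_n \Leftrightarrow |\phi_2|_n$, so they genuinely capture the bounded semantics.

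For the reverse direction, given $\mathfrak{M} \models \lceil\phi\rceil$, I would prove by structural induction on $\psi$ that, for every $t \leq n$, $\mathfrak{M} \models |\psi|_t$ iff $\llceil\mathfrak{M}\rrceil^{-1}, t \models \psi$; here the axioms $\mathrm{Ax}_{\psi}$ play the active role that the semantic construction of $\llceil\cdot\rrceil$ played before, and the $\mathrm{U}$ case at $t = n$ is again the delicate one, using the same UP wrap-around together with the inductive semantics extracted from the $\mathrm{U'}$ axioms. Applying the inductive equivalence to $\psi = \phi$ and $t = 0$ yields $\llceil\mathfrak{M}\rrceil^{-1} \models \phi$. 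Lemma \ref{latex_lib_label_60} guarantees that indices outside $\{0,\dots,n\}$ do not enter the instance and hence are irrelevant to the equivalence, and the fact that $\llceil\cdot\rrceil$ and $\llceil\cdot\rrceil^{-1}$ are mutual inverses on $2$-initial segments restricted to this range (as noted in the remark after Definition \ref{latex_lib_label_54}) then completes the bijection claim.
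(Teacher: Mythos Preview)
Your proposal is correct and follows essentially the same route as the paper: forward direction by verifying each $\mathrm{Ax}_{\psi}$ via case analysis on the head symbol (using the semantic definition of $\llceil\cdot\rrceil$ and the UP wrap-around at $i=n$), backward direction by structural induction on subformulae, and the appeal to Lemma~\ref{latex_lib_label_60} plus the mutual-inverse property on indices $\leq n$ to finish the bijection. Your treatment of the $\mathrm{U}/\mathrm{U'}$ interaction at $i=n$ is in fact spelled out more carefully than in the paper, and your choice to phrase the backward direction as an ``iff'' is slightly cleaner than the paper's one-sided statement (which then recovers the other direction in the negation case via $\mathfrak{M}=\llceil(\llceil\mathfrak{M}\rrceil^{-1})\rrceil$).
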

  \begin{proof}

    We first prove that the codomain of $\llceil .\rrceil $ indeed falls into the set of models of $\lceil \phi \rceil $,
            i.e., for every UP model ${\sigma}$ of a formula ${\phi}$, $\llceil \sigma \rrceil \models \lceil \phi \rceil $.
            Let $k,l$ be the prefix index and period of ${\sigma}$.
            Note that, by Definition \ref{latex_lib_label_54}, $\llceil \sigma \rrceil $ gives the value $k+l-1$ to the parameter.
            Then we actually prove the more general result that
            for any $t\in \mathbb{N}$, if $\sigma , t\models \phi $ then $\llceil \sigma \rrceil \models {\mathrm{s}}_{\leq }^{\mathrm{pfx}}\land \left| \phi \right| _{t}\land \Phi ^{\phi }$.
            First, since $\llceil \sigma \rrceil $ is a $2$-initial segment w.r.t. $\mathrm{pfx}$, $\llceil \sigma \rrceil \models {\mathrm{s}}_{\leq }^{\mathrm{pfx}}$.
            Second, $\llceil \sigma \rrceil \models \left| \phi \right| _{t}$ by definition (Definition \ref{latex_lib_label_61}).
            Finally we prove $\llceil \sigma \rrceil \models \Phi ^{\phi }$ by proving that $\llceil \sigma \rrceil \models \mathrm{Ax}_{\psi }$ for any subformula ${\psi}$ of ${\phi}$, depending on the head symbol of ${\psi}$:
    \begin{itemize}
      \item Assume $\phi =\top $. 
                For any $t\in \mathbb{N}$, $\sigma , t\models \top $, by definition of LTL semantics.
                Thus, by definition of $\llceil \sigma \rrceil $, $\llceil \sigma \rrceil \models \left| \top \right| _{t}$ for every $t$.
                Hence, $\llceil \sigma \rrceil \models \mathrm{Ax}_{\top }$, and thus $\llceil \sigma \rrceil \models \Phi ^{\phi }$.
      \item Assume $\phi =\neg \psi $.
                For any $t\in \mathbb{N}$, $\sigma , t\models \phi $ iff $\sigma , t\not \models \psi $, by definition of LTL semantics.
                Thus, by definition of $\llceil \sigma \rrceil $, $\llceil \sigma \rrceil \models \left| \phi \right| _{t}$ iff $\llceil \sigma \rrceil \not \models \left| \psi \right| _{t}$, i.e. iff $\llceil \sigma \rrceil \models \neg \left| \psi \right| _{t}$ 
                (this time by definition of schemata semantics).
                Consequently, $\llceil \sigma \rrceil \models \left| \phi \right| _{t}\Leftrightarrow \neg \left| \psi \right| _{t}$ for any $t\in \mathbb{N}$, and in particular for $t$ between $0$ and $k+l-1$.
                Thus $\llceil \sigma \rrceil \models \mathrm{Ax}_{\phi }$.
                \item The  conjunction cases are similar.
      \item Assume that $\phi =\phi _{1}\mathrm{U}\phi _{2}$.
                Let $t\in \mathbb{N}$ be s.t. $\sigma , t\models \phi $.
                By the semantics of LTL, there exists $t'\geq t$ s.t. $\sigma , t'\models \phi _{2}$ and, for all $t''$ between $t$ and $t'-1$, $\sigma , t''\models \phi _{1}$.
                Thus either $\sigma , t\models \phi _{2}$ or $\sigma , t\models \phi _{1}$ and $\sigma , t+1\models \phi $.
                Hence, by definition of $\llceil .\rrceil $, $\llceil \sigma \rrceil \models \left| \phi _{2}\right| _{t}$ or $\llceil \sigma \rrceil \models \left| \phi _{2}\right| _{t+1}$ and $\llceil \sigma \rrceil \models \left| \phi \right| _{t+1}$,
                which enables us to conclude for the first conjunct of the $\mathrm{U}$ case.
                For the reverse implication, suppose $\llceil \sigma \rrceil \models \left| \phi _{2}\right| _{t}\lor ( \left| \phi _{1}\right| _{t}\land \left| \phi _{1}\mathrm{U}\phi _{2}\right| _{t+1}) $ for some $t\in \mathbb{N}$.
                If $\llceil \sigma \rrceil \models \left| \phi _{2}\right| _{t}$, then it is clear that $\llceil \sigma \rrceil \models \left| \phi _{1}\mathrm{U}\phi _{2}\right| _{t}$.
                If $\llceil \sigma \rrceil \models \left| \phi _{1}\right| _{t}\land \left| \phi _{1}\mathrm{U}\phi _{2}\right| _{t+1}$, then by definition $\sigma , t\models \phi _{1}\land \phi _{1}\mathrm{U}\phi _{2}$, thus $\llceil \sigma \rrceil \models \left| \phi _{1}\mathrm{U}\phi _{2}\right| _{t}$.\par           For the second conjunct 
                (i.e. the second line; notice that, for the sake of presentation simplicity, there is one conjunct per line), assume $t=k+l-1$.
                If $\sigma , t\models \phi _{2}$ then we are done.
                Otherwise, we have $t'>k+l-1$.
                This means that $\sigma , k+l\models \phi _{1}\mathrm{U}\phi _{2}$, which, by periodicity, is equivalent to $\sigma , k\models \phi _{1}\mathrm{U}\phi _{2}$, and so to $\sigma , k\models \phi _{1}\mathrm{U'}\phi _{2}$.\par           For the third and fourth conjuncts the proof is similar,
                \emph{except that we now must ensure that the instant when $\phi _{2}$ occurs must be lower or equal to $k+l-1$}.
                This is indeed the case of $t'\!\downarrow$.
                \item The case of $\mathrm{X}$ is similar (but much simpler).
    \end{itemize}

    We now focus on the inverse transformation.
            First, $\llceil .\rrceil ^{-1}$ is well defined: since $\lceil \phi \rceil $ contains ${\mathrm{s}}_{\leq }^{\mathrm{pfx}}$, every model of $\lceil \phi \rceil $ is a $2$-initial segment.
            Second, it is easily seen that $\llceil ( \llceil \mathfrak{I}\rrceil ) \rrceil ^{-1}=\mathfrak{I}$.
            Third, $\llceil ( \llceil \mathfrak{I}, n\rrceil ^{-1}) \rrceil $ is well defined since, by definition, $\llceil \mathfrak{I}, n\rrceil ^{-1}$ is UP.
            Then $\llceil ( \llceil \mathfrak{I}, n\rrceil ^{-1}) \rrceil =( \mathfrak{I}, n) $, \emph{if we restrict to the values of $\mathfrak{I}$ for indices below $n$}.
            However they might differ for indices above $n$, but, by Lemma \ref{latex_lib_label_60},
            variables with such indices do not occur in the instance of $\lceil \phi \rceil $ by $n$.
            Since we consider equality among interpretations only up to the values of propositional
            variables occurring in the corresponding instance of $\lceil \phi \rceil $,
            we indeed have the intended equality.

    We finally show that the codomain of $\llceil .\rrceil ^{-1}$ indeed falls in the set of models of ${\phi}$, i.e. that for every model $\mathfrak{M}$ of $\lceil \phi \rceil $, $\llceil \mathfrak{M}\rrceil ^{-1}\models \phi $.
            We shall prove the more general result that, for every $t\leq k+l-1$, where $k,l$ are the prefix index and the period of $\llceil \mathfrak{M}\rrceil ^{-1}$,
            and every subformula ${\psi}$ of ${\phi}$, if $\mathfrak{M}\models {\mathrm{s}}_{\leq }^{\mathrm{pfx}}\land \left| \psi \right| _{t}\land \Phi ^{\phi }$ then $\llceil \mathfrak{M}\rrceil ^{-1}, t\models \psi $.
            By induction on the structure of ${\phi}$:
    \begin{itemize}
      \item For ${\top}$  this is trivial.
      \item Assume $\phi =\neg \psi $. 
                Then $\mathfrak{M}\models {\mathrm{s}}_{\leq }^{\mathrm{pfx}}\land \left| \phi \right| _{t}\land \Phi ^{\phi }$ implies $\mathfrak{M}\models \left| \neg \psi \right| _{t}$ and $\mathfrak{M}\models \mathrm{Ax}_{\neg \psi }$, thus $\mathfrak{M}\models \neg \left| \psi \right| _{t}$, so $\mathfrak{M}\not \models \left| \psi \right| _{t}$.
                As already shown, $\mathfrak{M}=\llceil ( \llceil \mathfrak{M}\rrceil ^{-1}) \rrceil $ (as far as we consider $t\leq k+l-1$ which is indeed the case here), hence $\llceil ( \llceil \mathfrak{M}\rrceil ^{-1}) \rrceil \not \models \left| \psi \right| _{t}$.
                Hence, by contraposition w.r.t. the previous result in this proof, $\llceil \mathfrak{M}\rrceil ^{-1}, t\not \models \psi $.
                Consequently, $\llceil \mathfrak{M}\rrceil ^{-1}, t\models \phi $.
      \item For conjunction  the result is routine, using the induction hypothesis.
      \item Assume $\phi =\mathrm{X}\psi $.
                Then we have either $t<k+l-1$ or $t=k+l-1$.
                In the first case, one easily gets $\mathfrak{M}\models {\mathrm{s}}_{\leq }^{\mathrm{pfx}}\land \left| \psi \right| _{t+1}\land \Phi ^{\psi }$ by the first conjunct of $\mathrm{Ax}_{\mathrm{X}\psi }$ and concludes by induction hypothesis.
                In the second case, $\left| \phi \right| _{k+l-1}$ is equivalent to $\left| \phi \right| _{\mathsf{n}}$ (because, by definition of $\llceil .\rrceil ^{-1}$, $k+l-1$ is the value given to $\mathsf{n}$),
                so we can use the second conjunct which states that ${\psi}$ must hold at time $k$.
                By the UP property, ${\psi}$ also holds at time $k+l$, i.e. $t+1$, hence the result.
      \item Finally assume $\phi =\phi _{1}\mathrm{U}\phi _{2}$.
                We have two cases:
                Either there is some $t'$ comprised between $t$ and $k+l-1$ s.t. $\left| \phi _{2}\right| _{t'}$ holds;
                assume furthermore that $t'$ is the smallest time with this property;
                in this case, $\left| \phi _{1}\right| _{ }$ must hold between $t$ and $t'$, by the (iterated application of the) first conjunct of $\mathrm{Ax}_{\phi _{1}\mathrm{U}\phi _{2}}$;
                we then just apply the induction hypothesis to conclude.
                Or there is no such $t'$, in which case $\left| \phi _{1}\right| _{ }$ must hold from $t$ to $k+l-1$, by the same argument.
                Furthermore, since $\left| \phi _{2}\right| _{t'}$ never holds for $t'$ between $t$ and $k+l-1$, the iteration $\bigwedge _{\mathsf{i}=0}^{\mathsf{n}}( \mathsf{i}=k\Rightarrow \left| \phi _{1}\mathrm{U'}\phi _{2}\right| _{\mathsf{i}}) $ 
                of the second conjunct also holds in $\mathfrak{M}$.
                Hence $\mathfrak{M}\models \left| \phi _{1}\mathrm{U'}\phi _{2}\right| _{k}$.
                Consequently, by the iterated application of the last two conjuncts, 
                there must be some $t'$ comprised between $k$ and $k+l-1$ (actually $t-1$ is sufficient) s.t. $\left| \phi _{2}\right| _{t'}$ holds,
                and $\left| \phi _{1}\right| _{ }$ holds in between:
                indeed, the last conjunct imposes that $\left| \phi _{2}\right| _{ }$ must hold at worst at instant $\mathsf{n}$
                (note: this is precisely why $\mathrm{U'}$ is needed).
                The fact that $\left| \phi _{1}\right| _{ }$ holds in between is due to the similar structure between the two first and the two last conjuncts.
                Finally, by the UP property, the same holds for $t'+l$, which enables us to conclude.
                {\qed}
    \end{itemize}
  \end{proof}

  Furthermore it is trivial that $\#\lceil \phi \rceil $ is linear w.r.t. $\#\phi $. 
  \begin{corollary2}
    The satisfiability problem for LTL can be reduced in linear space to the satisfiability problem for SPS.
  \end{corollary2}
  \begin{theorem}
    The satisfiability problem for SPS is PSPACE-complete if numbers are encoded in unary or bounded by a constant.
  \end{theorem}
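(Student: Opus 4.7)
The plan is to derive both directions of the completeness from results already established in the paper. For membership in PSPACE, there is nothing new to do: Theorem \ref{latex_lib_label_34} already states exactly that the satisfiability of SPS is in PSPACE whenever numbers are encoded in unary or bounded by a constant. So the whole task reduces to proving PSPACE-hardness under the same encoding hypotheses.

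For hardness, I would appeal to the classical result of Sistla and Clarke \cite{ltl-complexity} that LTL satisfiability is PSPACE-hard, and combine it with the translation $\lceil .\rceil $ from LTL to SPS constructed in Definition \ref{latex_lib_label_38}. The immediately preceding remark notes that $\#\lceil \phi \rceil $ is linear in $\#\phi $, and Theorem \ref{latex_lib_label_13} ensures that ${\phi}$ is satisfiable iff $\lceil \phi \rceil $ is satisfiable (restricting to UP models, which by Theorem \ref{latex_lib_label_1} is no loss of generality for satisfiability). Hence $\lceil .\rceil $ is a polynomial-time (indeed logarithmic-space) many-one reduction from LTL-SAT to SPS-SAT, which transfers PSPACE-hardness.

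The one subtlety to check — and the only place where the encoding hypothesis comes in — is that the output $\lceil \phi \rceil $ indeed lies in the class of SPS covered by the unary/constant regime. Inspecting Definition \ref{latex_lib_label_38}, every integer literal appearing in $\lceil \phi \rceil $ is either $0$, $1$, or constants from $\mathrm{Ax}_{\mathsf{i}=k}$ and ${\mathrm{s}}_{\leq }^{\mathrm{pfx}}$; all propositional indices are of the form $0$, $\mathsf{n}$, $\mathsf{i}$ or $\mathsf{i}+1$, and the iteration bounds are $0$ and $\mathsf{n}-1$ (using the abbreviation $\bigwedge _{\mathsf{i}=0}^{\mathsf{n}}$). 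Thus $\#_{\mathrm{int}}\lceil \phi \rceil = O(1)$, so the image of the reduction falls within the constant-number fragment and a fortiori within the unary fragment. Combined with Theorem \ref{latex_lib_label_34}, this gives PSPACE-completeness.

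The only mildly delicate step is the observation about the size of the integer constants produced by $\lceil .\rceil $; everything else is a direct invocation of prior results, so I do not anticipate any real obstacle.
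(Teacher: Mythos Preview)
Your proposal is correct and follows essentially the same route as the paper: membership from Theorem \ref{latex_lib_label_34}, hardness from PSPACE-completeness of LTL satisfiability combined with the linear reduction $\lceil .\rceil $ (the corollary immediately preceding the theorem). Your additional observation that all integer constants produced by $\lceil .\rceil $ are bounded by $1$ is a useful explicit check that the paper leaves implicit in the phrase ``linear space'', but it does not change the structure of the argument.
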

  \begin{proof}
    Consequence of the fact that the satisfiability problem for LTL is
        PSPACE-complete, of the previous corollary and of Theorem \ref{latex_lib_label_34}.{\qed}
  \end{proof}

  Notice however that this result could be proved in a much simpler way by
      directly encoding a polynomial space Turing machine with SPS.
      Such a proof would be very close to the one of Theorem 1 in \cite{planning-complexity}.

  {\bigskip}\emph{Improvements. }For practical efficiency, we can improve over Definition \ref{latex_lib_label_38}.
        We can translate the purely propositional connectives directly, i.e. without axiomatising them:
        any occurrence of an atom $\left| \top \right| _{e}$ (resp. $\left| \neg \phi \right| _{e}$, resp. $\left| \phi _{1}\land \phi _{2}\right| _{e}$)
        is directly replaced by ${\top}$ (resp. ${\neg}$$\left| \phi \right| _{e}$, resp. $\left| \phi _{1}\right| _{e}\land \left| \phi _{2}\right| _{e}$)
        repeatedly until there is no more such occurrence.
        The same applies to $\lor $, $\Rightarrow $ and $\Leftrightarrow $.
        Those are defined as abbreviations in the present paper in order to simplify definitions and proofs,
        but it is of course more efficient in practice to translate them directly when available as primitive connectives
        (obviously, this is also true for Definition \ref{latex_lib_label_8}).\par       Another optimization can be devised by observing that all schemata decision procedures \cite{tab09,nested} reason by induction on $\mathsf{n}$,
        i.e. they refute a schema for any value of $\mathsf{n}$ by reduction to the case $\mathsf{n}-1$.
        In our reduction, $\mathsf{n}$ corresponds to the last instant of the UP interpretation.
        Consequently, a schema procedure applied to a translated LTL formula
        starts by considering the \emph{last} instant of the interpretation and then going backward.
        This is counter natural since we try to refute a formula \emph{at time $0$}.
        For instance, an inductive proof is achieved for the formula $\mathrm{X}p\land \mathrm{X}\neg p$ even though this is obviously not needed:
        one would naturally try to first see what happens at time $0$ and then switch to the next state, as is done with LTL procedures.
        To tackle this problem we just need to change the translation by ``inverting the time'':
        i.e. the index $0$ will be interpreted as the last instant of the period and the index $\mathsf{n}$ as its first instant.
        Concretely, in Definition \ref{latex_lib_label_38},
        we just rewrite every index $\mathsf{i}-1$ into $\mathsf{i}$, every index $\mathsf{i}$ into $\mathsf{i}+1$, every index $0$ into $\mathsf{n}$, and every index $\mathsf{n}$ into $0$.
        Experiments with this translation indeed confirm that conjectures are refuted faster using this new translation.
      \begin{remark2}
    \label{latex_lib_label_78}The translation given here might remind the reader of bounded model checking (BMC) \cite{bmc}.
        A very important difference however is that our reduction is \emph{complete}, which is of course not the case of BMC.
        Indeed, the whole point of schemata is to reason about an infinite family of propositional formulae \emph{without having to instantiate the parameter}.
        Our translation could of course be used for BMC, simply by instantiating the parameter with successive natural numbers.
        However the converse does not hold: not every translation found in BMC could fit instead of Definition \ref{latex_lib_label_38},
        since the result must respect the syntactical criteria ensuring decidability of the satisfiability problem.
        For instance, renaming sub-formulae by propositional variables is just an optimization in the case of BMC 
        whereas in our case, it is \emph{needed} since, otherwise, the resulting schema would not be sequential (and not even regular).
        Completeness is an important problem in BMC which is usually tackled with notions like completeness thresholds and recurrence diameter \cite{bmc}
        or induction \cite{inductive-bmc}.
        Thorough analysis of how schemata procedures handle the above translation could give new ideas in order to get completeness for BMC.
  \end{remark2}
  \section{Implementation}\label{latex_lib_label_76}The implementations of both translations are available at \url{http://membres-liglab.imag.fr/aravantinos/Site/Software.html}.
      Some preliminary experiments ha{\-}ve been achieved on a few benchmarks:
      standard schemata examples provided with Reg{\-}\textsc{Stab} \cite{regstab} have been translated to LTL
      (note that the examples have been slightly modified in order to fit the constraints of SPS)
      and standard LTL pattern formulae \cite{ltl-reduction-experiments} have been translated to SPS.
      The performance of Reg{\-}\textsc{Stab} and \texttt{pltl} (\url{http://users.cecs.anu.edu.au/~rpg/software.html}) have been compared on both benchmarks.
      \emph{In both cases,} \texttt{pltl} clearly outperformed Reg{\-}\textsc{Stab}.
      We see two reasons to this:
      \begin{itemize}
    \item Reg{\-}\textsc{Stab} deals with regular schemata, which are more general than SPS.
        In particular, the decision procedure for such schemata requires the detection and elimination of pure literals
        (an adaptation of the ``Affirmative-negative rule'' of \cite{dpll}),
        which is well-known to be a huge time-consuming task 
        (and this is even more the case for schemata since we have to deal with a \emph{symbolic} notion of pure literal).
        This auxiliary procedure is needed for termination, and is mainly a consequence of the ``non-local'' aspect of schemata.
    \item 

    With LTL procedures, given a formula ${\phi}$, one knows in advance all the formulae that will occur in the deduction process:
        all of them belong to the closure of ${\phi}$ (merely the set of all subformulae of ${\phi}$, closed by negation and unfolding of temporal formulae);
        this permits the use of efficient data structures to represent sets of formulae, e.g. \texttt{pltl} uses bitsets.
        This is not the case of SPS (and even more regular schemata),
        e.g. refuting a schema containing $\bigwedge _{\mathsf{i}=0}^{\mathsf{n}}p_{\mathsf{i}}$ potentially leads to the introduction of $p_{\mathsf{n}}$, $p_{\mathsf{n}-1}$, $p_{\mathsf{n}-2}$, etc.
        By termination for regular schemata \cite{tab09}, this enumeration is finite but one does not know in advance how far it has to go.
        Hence the data structures used in Reg{\-}\textsc{Stab} are much heavier: e.g. we use balanced trees for sets of formulae.
        Thus, for big examples, the memory is easily saturated and Reg{\-}\textsc{Stab} spends much of its time in its handling
        which was absolutely not the case of \texttt{pltl}.
  \end{itemize}
      The most important reason seems to be the second one.
      It can actually be tackled in order to improve Reg{\-}\textsc{Stab} performance:
      we can syntactically extract from the input schema a bound for the above enumeration $p_{\mathsf{n}}$, $p_{\mathsf{n}-1}$, $p_{\mathsf{n}-2}$, {\dots}
      by analysis of the termination proof for regular schemata.
      Implementing this technique is ongoing work.

  Yet, there are examples where Reg{\-}\textsc{Stab} did better than \texttt{pltl}.
      Consider $( p_{1}\Rightarrow q_{\mathsf{n}+1}) \land p_{1}\land \neg q_{\mathsf{n}+1}\land \phi $ where ${\phi}$ is any formula involving some iterations.
      This schema is immediately refuted by Reg{\-}\textsc{Stab}, but the bigger ${\phi}$ is, the longer it takes for \texttt{pltl} to refute the corresponding LTL formula.
      Of course, this example was devised to emphasize one of the strengths of Reg{\-}\textsc{Stab}:
      contrarily to LTL procedures in general, and to \texttt{pltl} in particular, reasoning about schemata is \emph{global},
      i.e. Reg{\-}\textsc{Stab} may reason simultaneously on propositions containing  various \emph{symbolic} indices.
      In contrast, \texttt{pltl} will analyse the formula ${\phi}$ and the contradiction will appear only at the end of the construction 
      (i.e. by ``discovering'' eventually that $t=\mathsf{n}$ cannot hold at any state, since it would allow to derive a contradiction).\section{Discussion}\label{latex_lib_label_69}\subsection{Pros and cons of each logic}

  Since LTL and SPS are equivalent w.r.t. satisfiability, one may wonder which to favour.
        There are two major differences between LTL and schemata:
        \begin{itemize}
    \item LTL default interpretations are \emph{infinite} whereas those of schemata are \emph{finite};
    \item LTL refers to states in an \emph{anonymous} way, whereas schemata \emph{name} them.
  \end{itemize}
        These differences provide us with clear criteria for choosing one logic or the other in different situations:
        to specify an infinite behaviour, one would naturally use LTL,
        whereas classes of structurally similar finite behaviours are more naturally specified with schemata.
        Unsurprisingly, the specification of temporal behaviours falls of course in the first category.
        But, e.g., the specification of a circuit independently of the number of bits of its input falls in the second category.
        Consider for instance the specification of a ripple-carry adder:
        \begin{displaymath}
    \bigwedge _{\mathsf{i}=0}^{\mathsf{n}}( ( s_{\mathsf{i}}\Leftrightarrow ( x_{\mathsf{i}}\oplus y_{\mathsf{i}}) \oplus c_{\mathsf{i}}) \land ( c_{\mathsf{i}+1}\Leftrightarrow ( x_{\mathsf{i}}\land y_{\mathsf{i}}) \lor ( y_{\mathsf{i}}\land c_{\mathsf{i}}) \lor ( x_{\mathsf{i}}\land c_{\mathsf{i}}) ) ) \land \neg c_{0}
  \end{displaymath}
  where $x_{0}$, {\dots}, $x_{\mathsf{n}}$ and $y_{0}$, {\dots}, $y_{\mathsf{n}}$
        are the input bit vectors of size $\mathsf{n}$; $s_{0}$, {\dots}, $s_{\mathsf{n}}$ is the output bit vector and $c_{0}$, {\dots}, $c_{\mathsf{n}}$ is the carry vector. 
        Here the indices indeed correspond to the time in a \emph{concrete} sequential circuit.
        However, from a specification point of view, those indices are just an abstract way to represent a generic scheme of circuits.
        Consequently, the schema syntax seems better suited to this case (notice furthermore that it is very intuitive).\par       Similarly, the choice between a named or an anonymous representation of states depends on the situation.
        The $\mathrm{X}$ connective is well suited to express properties in a \emph{local} way, 
        since there is no need to explicitly use an index to refer to the current or the next state.
        The $\mathrm{U}$ connective is also far more intuitive than its translation to SPS
        to refer to \emph{some instant} satisfying some property in the future.
        On the other hand, in order to refer to an \emph{identified} instant of the future, one needs to refer to it by giving it a name,
        which is easily done with the schema syntax thanks to arithmetic.
        Consider e.g. the example $p_{0}\land \bigwedge _{\mathsf{i}=0}^{\mathsf{n}-1}( p_{\mathsf{i}}\Rightarrow p_{\mathsf{i}+1}) \land \neg p_{\mathsf{n}}$
        translated as $p\land \mathrm{G}( \mathrm{t<\mathsf{n}}\Rightarrow p\Rightarrow \mathrm{X}p) \land \mathrm{G}( \mathrm{t=\mathsf{n}}\Rightarrow \neg p) $
        (plus the necessary axioms ${\phi }_{<}^{\mathrm{t<\mathsf{n}}}\land \mathrm{Ax}_{\mathrm{t=\mathsf{n}}}$) in LTL.
        One can even specify behaviours \emph{after} that time (but this goes beyond sequential schemata \cite{jair2011}),
        e.g. one can write
        $p_{0}\land \bigwedge _{\mathsf{i}=0}^{\mathsf{n}}( p_{\mathsf{i}}\Rightarrow p_{\mathsf{i}+1}) \land \bigwedge _{\mathsf{i}=\mathsf{n}}^{2\mathsf{n}}( \neg p_{\mathsf{i}+1}\Rightarrow \neg p_{\mathsf{i}}) \land \neg p_{2\mathsf{n}}$.
        It seems improbable that such a property would be useful in a temporal context,
        but this could be used to specify planning problems with some predefined strategy
        e.g. if one wants to allow some set of actions in a first phase of a planning problem and then another set in some other phase of this problem.\subsection{Behaviour of $\left\lfloor .\right\rfloor $ w.r.t. LTL decision procedures}We now analyse \emph{informally} how the standard multi-pass tableau procedure of \cite{wolper} (called \textsc{LTL-tab} from now on)
        behaves on a translated schema.
        Consider the example $p_{0}\land \bigwedge _{\mathsf{i}=0}^{\mathsf{n}-1}( p_{\mathsf{i}}\Rightarrow p_{\mathsf{i}+1}) \land \neg p_{\mathsf{n}}$
        and its translation 
        $p\land \mathrm{G}( \mathrm{t<\mathsf{n}}\Rightarrow p\Rightarrow \mathrm{X}p) \land \mathrm{G}( \mathrm{t=\mathsf{n}}\Rightarrow \neg p) \land {\phi }_{<}^{\mathrm{t<\mathsf{n}}}\land \mathrm{Ax}_{\mathrm{t=\mathsf{n}}}$\footnote{Notice that this translation has been simplified since we use $\mathrm{G}( \mathrm{t=\mathsf{n}}\Rightarrow \neg p) $ instead of ${\neg}$$\mathrm{G}( \mathrm{t=\mathsf{n}}\Rightarrow p) $.}.
        We do not present a detailed tableau,
        instead we just sketch its construction by focusing on the most relevant branches 
        (the following requires some knowledge of \textsc{LTL-tab}, see \cite{wolper} otherwise).\par       When applying \textsc{LTL-tab}, the rule for the $\mathrm{U}$ connective applies on ${\phi }_{<}^{\mathrm{t<\mathsf{n}}}$ (i.e. $( \mathrm{t<\mathsf{n}}) \mathrm{U}\mathrm{G}( \neg \mathrm{t<\mathsf{n}}) $) and generates 
        one branch where $\mathrm{G}( \neg \mathrm{t<\mathsf{n}}) $ holds and one branch where $\mathrm{t<\mathsf{n}}$ and $\mathrm{X}( ( \mathrm{t<\mathsf{n}}) \mathrm{U}\mathrm{G}( \neg \mathrm{t<\mathsf{n}}) ) $ hold.
        Intuitively, the first one corresponds to the case $\mathsf{n}=0$ 
        (since it states that ${\neg}$$( \mathrm{t<\mathsf{n}}) $ always holds from the initial state till ${\omega}$)
        while the second one corresponds to $\mathsf{n}>0$ (since $\mathrm{t<\mathsf{n}}$ holds at the initial state).
        In the first case, \textsc{LTL-tab} easily finds a contradiction using mostly propositional reasoning
        (${\neg}$$\mathrm{t<\mathsf{n}}$ entails $\mathrm{t=\mathsf{n}}$ thanks to $\mathrm{Ax}_{\mathrm{t=\mathsf{n}}}$, and $\mathrm{t=\mathsf{n}}$ entails ${\neg}$$p$ with $\mathrm{G}( \mathrm{t=\mathsf{n}}\Rightarrow \neg p) $, thus yielding a contradiction).
        In the second case, since $\mathrm{t<\mathsf{n}}$ holds, one easily obtain $\mathrm{X}p$ by propositional reasoning with $\mathrm{t<\mathsf{n}}\Rightarrow p\Rightarrow \mathrm{X}p$.
        Then the decomposition of $\mathrm{Ax}_{\mathrm{t=\mathsf{n}}}$ yields $\mathrm{G}( \mathrm{t<\mathsf{n}}\land \neg \mathrm{X}( \mathrm{t<\mathsf{n}}) \Leftrightarrow \mathrm{X}( \mathrm{t=\mathsf{n}}) ) $.
        By application of the rule for $\mathrm{G}$, one immediately gets the formula $\mathrm{t<\mathsf{n}}\land \neg \mathrm{X}( \mathrm{t<\mathsf{n}}) \Leftrightarrow \mathrm{X}( \mathrm{t=\mathsf{n}}) $,
        and we then get two non-closed branches:
        one where ${\neg}$$\mathrm{X}( \mathrm{t<\mathsf{n}}) $ and $\mathrm{X}( \mathrm{t=\mathsf{n}}) $ hold (call this state ``1''),
        and one where $\mathrm{X}( \mathrm{t<\mathsf{n}}) $ and ${\neg}$$\mathrm{X}( \mathrm{t=\mathsf{n}}) $ hold (``2'').
        At the next state, we thus have two branches: one where ${\neg}$$\mathrm{t<\mathsf{n}}$ and $\mathrm{t=\mathsf{n}}$ hold, and one where $\mathrm{t<\mathsf{n}}$ and ${\neg}$$\mathrm{t=\mathsf{n}}$ hold.
        The first branch means that the instant corresponding to $\mathsf{n}$ has been reached and is easily closed similarly to the base case
        (actually, up to some formulae that only occur in the initial formula, this state is the same as the one corresponding to $\mathsf{n}=0$).
        The second branch means that $\mathsf{n}$ has still not been reached, thus we can go to the next state without encountering a contradiction.
        This is easily seen to lead either to state ``1'' or ``2'', hence the construction of the tableau terminates.
        Since ``2'' is closed the only non closed branch is the one that indefinitely loops on ``1''.
        But this loop is closed in the second pass because the eventuality $( \mathrm{t<\mathsf{n}}) \mathrm{U}\mathrm{G}( \neg \mathrm{t<\mathsf{n}}) $ is never satisfied.\par       To sum up, the construction of this tableau follows quite faithfully a proof by induction on the parameter $\mathsf{n}$.
        The axioms ${\phi }_{<}^{\mathrm{t<\mathsf{n}}}$ and $\mathrm{Ax}_{\mathrm{t=\mathsf{n}}}$ contain the arithmetic content that drive the induction,
        while $\left\lfloor s\right\rfloor _{\mathrm{prop}}$ contains the purely propositional content.
        Since LTL has to deal with infinite interpretations the induction is not well-founded in general
        (this is of course a wanted feature of LTL in order to deal with coinductive specifications).
        But the axiom ${\phi }_{<}^{\mathrm{t<\mathsf{n}}}$ introduces the eventuality $( \mathrm{t<\mathsf{n}}) \mathrm{U}\mathrm{G}( \neg \mathrm{t<\mathsf{n}}) $ which enforces a well-founded induction.
        Notice that $\left\lfloor .\right\rfloor $ can be modified so that ${\phi }_{<}^{\mathrm{t<\mathsf{n}}}$ be the only eventuality occurring in the resulting formula.
        Indeed, in its current state, the translation may introduce eventualities in two ways:
        either by negating an iteration $\bigwedge _{\mathsf{i}=0}^{\mathsf{n}-1}s$, or by negating an atom of the form $p_{\mathsf{n}+k}$.
        In the first case, the negation is equivalent to $\bigvee _{\mathsf{i}=0}^{\mathsf{n}-1}\neg s$ which can easily be simulated by the proposition $q_{\mathsf{n}}$ 
        with the axiom $\neg q_{0}\land \bigwedge _{\mathsf{i}=0}^{\mathsf{n}-1}q_{\mathsf{i}+1}\Leftrightarrow ( \neg s\lor q_{\mathsf{i}}) $.
        In the second case the translation of ${\neg}$$p_{\mathsf{n}+k}$ is ${\neg}$$\mathrm{G}( \mathrm{t=\mathsf{n}}\Rightarrow \mathrm{X}^{k}p) $.
        But, as already encountered in Example \ref{latex_lib_label_68}, this is equivalent to $\mathrm{G}( \mathrm{t=\mathsf{n}}\Rightarrow \mathrm{X}^{k}\neg p) $
        since $\mathrm{t=\mathsf{n}}$ holds at only one instant.
        Consequently one can get rid of those artificial eventualities as follows:
        \begin{itemize}
    \item put the schema in negation normal form (this introduces ${\bot}$, disjunctions and iterated disjunctions);
    \item delete every iterated disjunction by replacing it with a proposition $q_{\mathsf{n}}$ axiomatized as above;
    \item apply the translation (which is straightforwardly extended to ${\bot}$ and disjunction) by handling the case ${\neg}$$p_{\mathsf{n}+k}$ as above.
  \end{itemize}
        This is interesting since it makes the second pass much easier to handle.
        Furthermore it shows clearly that the overall proof is indeed an inductive proof,
        obtained from a coinductive proof by discarding the ill-founded branch in the second pass.\par       Proof procedures for schemata are defined by combining usual propositional procedures and inductive reasoning.
        This inductive reasoning is performed by a loop detection during the construction of the tableau.
        For instance \textsc{Stab} \cite{tab09} is defined by extending semantic tableaux.
        The reader acquainted with \textsc{Stab} may have noticed that the tableau we just sketched looks quite similar 
        to the one that would be obtained with \textsc{Stab} for the corresponding schema.
        This is mainly a matter of strategy since we oriented the construction in a way to make it understandable from a ``schema point of view''.
        There are many other tableaux that would have differed from the one obtained with \textsc{Stab}.
        The main differences between \textsc{LTL-tab} and \textsc{Stab} are the following:
        \begin{itemize}
    \item Arithmetic is handled natively in \textsc{Stab};
    \item In \textsc{LTL-tab}, termination is ensured by identifying nodes with the same labels,
            whereas this is not sufficient, in \textsc{Stab}, to ensure termination:
            a dedicated cycle relation must be defined (e.g. there is a cycle between $\bigwedge _{\mathsf{i}=0}^{\mathsf{n}-1}s$ and $\bigwedge _{\mathsf{i}=0}^{\mathsf{n}}s$).
            This is obviously not an essential difference, which is only related to the way schemata are represented and stored in the nodes;
    \item In \textsc{LTL-tab}, an artificial branch corresponding to an ill-founded derivation is discarded in the second phase,
            whereas in \textsc{Stab} the cycle relation embeds a (strict) ordering which
            ensures the well-foundedness of the derivation
            (e.g. $\bigwedge _{\mathsf{i}=0}^{\mathsf{n}}s$ cannot loop on itself).
            \emph{Consequently \textsc{Stab} does not require a second phase.}
    \item In \textsc{LTL-tab}, the reasoning is purely local, i.e. only formulae that are true at the current state are derived.
            In contrast, \textsc{Stab} may reason simultaneously on propositions containing  various \emph{symbolic} indices.
            This is related to the fact that schemata handles time in a symbolic way
            and it explains why, as mentioned in Section \ref{latex_lib_label_76},
            \texttt{pltl} performed so bad on $( p_{1}\Rightarrow q_{\mathsf{n}+1}) \land p_{1}\land \neg q_{\mathsf{n}+1}\land \phi $ where ${\phi}$ is a big formula involving some iterations.
            In contrast \textsc{LTL-tab} (and \texttt{pltl}) analyses the formula ${\phi}$ and the contradiction appears only at the end of the construction 
            (i.e. by ``discovering'' eventually that $t=\mathsf{n}$ cannot hold at any state, since it would allow to derive a contradiction).
  \end{itemize}
  \subsection{Behaviour of $\lceil .\rceil $ w.r.t. SPS decision procedures}Conversely, we can consider an LTL formula ${\phi}$ and apply \textsc{Stab} on $\lceil \phi \rceil $.
        For instance, take the unsatisfiable LTL formula $\phi \stackrel{{\mbox{\tiny{def}}}}{=}\mathrm{X}p\land \neg \mathrm{X}p$.
        The translation is then the conjunction of the following schemata 
        (we use the optimizations mentioned at the end of Section \ref{latex_lib_label_66}):
        \begin{displaymath}
    \begin{gathered}
      \left| \mathrm{X}p\right| _{\mathsf{n}}\land \left| \mathrm{X}\neg p\right| _{\mathsf{n}}\\
      \bigwedge _{\mathsf{i}=0}^{\mathsf{n}-1}( \left| \mathrm{X}p\right| _{\mathsf{i}+1}\Leftrightarrow p_{\mathsf{i}}) \\
      \left| \mathrm{X}p\right| _{0}\Leftrightarrow ( 0=k\Rightarrow p_{0}) \land \bigwedge _{\mathsf{i}=0}^{\mathsf{n}-1}( \mathsf{i}+1=k\Rightarrow p_{\mathsf{i}+1}) \\
      \bigwedge _{\mathsf{i}=0}^{\mathsf{n}-1}( \left| \mathrm{X}\neg p\right| _{\mathsf{i}+1}\Leftrightarrow \left| \mathrm{X}\neg p\right| _{\mathsf{i}}) \\
      \left| \mathrm{X}\neg p\right| _{0}\Leftrightarrow ( 0=k\Rightarrow \neg p_{0}) \land \bigwedge _{\mathsf{i}=0}^{\mathsf{n}-1}( \mathsf{i}+1=k\Rightarrow \neg p_{\mathsf{i}+1}) \\
    \end{gathered}
  \end{displaymath}
  \begin{displaymath}
    \begin{gathered}
      \neg \mathrm{pfx}_{\mathsf{n}}\Leftrightarrow n=k\\
      \bigwedge _{\mathsf{i}=0}^{\mathsf{n}-1}( \mathrm{pfx}_{\mathsf{i}}\land \neg \mathrm{pfx}_{\mathsf{i}}\Leftrightarrow \mathsf{i}=k) \\
      \neg \mathrm{pfx}_{0}\\
      \bigwedge _{\mathsf{i}=0}^{\mathsf{n}-1}( \mathrm{pfx}_{\mathsf{i}}\Rightarrow \mathrm{pfx}_{\mathsf{i}+1}) \\
    \end{gathered}
  \end{displaymath}
        It is immediately noticed that, even though the transformation is linear, the linear coefficient is very big:
        a very simple LTL formula is turned into a complicated schema.\par       We just sketch the resulting tableau.
        As explained in the previous section, the general idea of \textsc{Stab} is to refute a formula by induction on  the parameter $\mathsf{n}$.
        In the context of $\lceil .\rceil $, $\mathsf{n}$ represents the length of a UP interpretation.
        Consequently \textsc{Stab} shows that every UP interpretation falsifies the formula, by induction on $\mathsf{n}$.
        Such an approach is obviously original, but a priori not natural from an LTL point of view.
        The general scheme of the proof may be divided into three cases as follows (see Fig. \ref{latex_lib_label_74}):
        \begin{figure}[h]
    \begin{displaymath}
      \begin{gathered}
        \mbox{\input{exinduction.tex}}\\
        \mbox{\input{exinduction2.tex}}\\
        \mbox{\input{exinduction3.tex}}\\
      \end{gathered}
    \end{displaymath}
    \caption{Proof by induction on the size of a UP interpretation}\label{latex_lib_label_74}
  \end{figure}
        either the interpretation has only one state (looping on itself), or it has more than one state.
        In the first case, there are finitely many interpretations,
        so the proof is easily achieved (simply by a tableaux-like enumeration of interpretations).
        In the second case, we encounter two more cases, depending on the position of the prefix index:
        it can either coincide with the first state, or with a state farther in the interpretation
        (formally, this corresponds to a simple case splitting on the propositional variable $\mathsf{n}=k$).
        The reasoning in each case then depends on the formula itself.\par       It is particularly interesting to understand how we deal with eventualities:
        how does it happen that we do not need a second phase, similarly to \textsc{LTL-tab}?
        To answer this question, we can observe (again informally) how \textsc{Stab} behaves on the formula $\mathrm{G}p\land \mathrm{F}\neg p$.
        Notice that we can easily define a simplified translation for the connectives $\mathrm{G}$ 
        and $\mathrm{F}$ with the following axioms:
        \begin{displaymath}
    \begin{aligned}
      \mathrm{Ax}_{\mathrm{G}\phi } & \stackrel{{\mbox{\tiny{def}}}}{=}\mbox{\raisebox{-0.400000cm}{$
      \begin{array}{l}
        \bigwedge _{\mathsf{i}=0}^{\mathsf{n}-1}( \left| \mathrm{G}\phi \right| _{\mathsf{i}}\Leftrightarrow \left| \phi \right| _{\mathsf{i}}\land \left| \mathrm{G}\phi \right| _{\mathsf{i}+1}) \\[0.200000cm]
        \land ( \left| \mathrm{G}\phi \right| _{\mathsf{n}}\Leftrightarrow \left| \phi \right| _{\mathsf{n}}\land \bigwedge _{\mathsf{i}=0}^{\mathsf{n}}( \mathsf{i}=k\Rightarrow \left| \mathrm{G}\phi \right| _{\mathsf{i}}) ) \\[0.200000cm]
      \end{array}
      $}}\\
      \mathrm{Ax}_{\mathrm{F}\phi } & \stackrel{{\mbox{\tiny{def}}}}{=}\mbox{\raisebox{-1.100000cm}{$
      \begin{array}{l}
        \bigwedge _{\mathsf{i}=0}^{\mathsf{n}-1}( \left| \mathrm{F}\phi \right| _{\mathsf{i}}\Leftrightarrow \left| \phi \right| _{\mathsf{i}}\lor \left| \mathrm{F}\phi \right| _{\mathsf{i}+1}) \\[0.200000cm]
        \land ( \left| \mathrm{F}\phi \right| _{\mathsf{n}}\Leftrightarrow \left| \phi \right| _{\mathsf{n}}\lor \bigwedge _{\mathsf{i}=0}^{\mathsf{n}}( \mathsf{i}=k\Rightarrow \left| \mathrm{F'}\phi \right| _{\mathsf{i}}) ) \\[0.200000cm]
        \land \bigwedge _{\mathsf{i}=0}^{\mathsf{n}-1}( \left| \mathrm{F'}\phi \right| _{\mathsf{i}}\Leftrightarrow \left| \phi \right| _{\mathsf{i}}\lor \left| \mathrm{F'}\phi \right| _{\mathsf{i}+1}) \\[0.200000cm]
        \land ( \left| \mathrm{F'}\phi \right| _{\mathsf{n}}\Leftrightarrow \left| \phi \right| _{\mathsf{n}}) \\[0.200000cm]
      \end{array}
      $}}\\
    \end{aligned}
  \end{displaymath}
          where $\mathrm{F'}$ is a new connective which is to $\mathrm{F}$ what $\mathrm{U'}$ is to $\mathrm{U}$.
        The case with only one state is easily handled.
        When there are more than one state, it is easily seen that the conjecture $\mathrm{G}p\land \mathrm{F}\neg p$ still holds at the next instant.
        Thus if the prefix index is above $1$, then the induction hypothesis allows to conclude immediately.
        However when the prefix is empty, the induction hypothesis does not apply:
        we actually need to make a case splitting on the value of the variable $\mathrm{F'}\neg p$:
        intuitively, this variable holds iff there is some instant before $\mathsf{n}$ s.t. $p$ holds.
        If this variable is assumed true, then we easily obtain a contradiction with $\mathrm{G}p$ (by induction).
        If it is supposed false, then we get a contradiction with the (second conjunct of the) axiom of $\mathrm{U}$ 
        which states that $\phi _{1}\mathrm{U'}\phi _{2}$ must hold at time $k$ (i.e. $0$, here), and this concludes the refutation.\par       Let us now generalize the way eventualities are handled.
        At \emph{any moment}, the procedure ``stores'' the fact that
        \emph{any} eventuality occurring as a subformula of the original formula holds or not.
        This is stored in the corresponding ``primed'' subformula (i.e. it is true iff the eventuality holds).
        Then, if ever an eventuality does not hold at the end of the period, the second conjunct of the $\mathrm{U}$ axiom imposes that
        the eventuality held before that time, inside the period.
        If this was not the case, then the corresponding primed subformula is false,
        thus we get a contradiction and this interpretation is discarded.
        On the contrary, if the eventuality held before, then we found a model.\par       The reader acquainted with the one-pass Schwendimann algorithm (``SA'') for LTL \cite{one-pass} might recognize this behaviour.
        Indeed this algorithm builds a tableau by maintaining in each state a set of \emph{unfulfilled eventualities}.
        This can be seen as corresponding to the ``primed eventualities'' of our translation.
        The set of unfulfilled eventualities at a state can be retrieved simply as the set of primed eventualites that are false at that state.
        However, apart from those informal similarities, the procedures are quite different:
        \begin{itemize}
    \item \textsc{Stab} builds explicitly a UP interpretation (that can be retrieved directly from a non closed branch of the resulting tableau)
            whereas SA just ensures that such an interpretation exists
            (which can be retrieved by loop linearization, see \cite{one-pass}, proof of Theorem 28).
            This probably makes the outcome of SA more ``understandable'', since it is more compact.
    \item In \emph{any branch}, \textsc{Stab} considers \emph{all eventualities},
            whereas SA considers only the eventualities needed for the current branch.
            This makes probably SA more efficient than \textsc{Stab} since many useless situations are trivially discarded.
    \item On the other hand, the fact that \textsc{Stab} considers all eventualities makes it possible to consider a looping in the \emph{whole tree}.
            This is not the case of SA which imposes a looping in the \emph{current branch}.
            This is precisely why the worst-case complexity of SA is bigger than the one of algorithms à la Wolper.
            Of course, an implementation of \textsc{Stab} can still impose loopings to occur only in the current branch which 
            thus makes available both possibilities to \textsc{Stab}.
            Consequently, \emph{an advantage of \textsc{Stab} is that it allows for a one-pass algorithm, while preserving an exponential time complexity}.
    \item The trade-off is that \textsc{Stab} makes some redundant computations:
            for instance, the procedure needs to ``decide'' in advance if a node is the start of the UP interpretation's loop,
            thus leading to two different branches sharing many inferences.
            With SA, the inferences are just made irrespective of whether the node will be the start of the loop or not,
            and then the loop detection is handled by the algorithm itself.
            Similarly the fact that the semantics are encoded in the translation makes \textsc{Stab}
            consider some cases that would be automatically discarded by SA.
  \end{itemize}
  \section{Model checking safety properties with schemata: an example}\label{latex_lib_label_75}With the translation given in Section \ref{latex_lib_label_66},
        and classical results of reduction from satisfiability to model checking \cite{ltl-complexity,ltl-reduction-experiments},
        one can of course use schemata to model check LTL formulae.
        However if we restrict ourselves to  n.n.f. LTL formulae whose only temporal operators are $\mathrm{X}$ and $\mathrm{G}$,
        we can obtain a much simpler translation into schemata.
        Such formulae are of interest since they can in particular model \emph{safety properties},
        i.e. formulae of the form $\mathrm{G}\psi $ where ${\psi}$ is a purely propositional formula.
        Suppose we have a transition system $T$ and want to check if it is a model of ${\phi}$.
        We first recall those notions:
  \begin{definition2}
    \label{latex_lib_label_24}A \emph{transition system} is the triple of a set of \emph{states}
          $\mathcal{S}$, a set of \emph{actions} $A$, and a transition function $\delta  :
          \mathcal{S}\times A\rightarrow \mathcal{S}$.
          A (finite or infinite) \emph{path} is a sequence of states which respects the transition function.\par       An \emph{interpreted transition system} is the pair of a transition
          system and a \emph{labelling function} $l : \mathcal{S} \rightarrow 
          2^\mathcal{P}$, where $\mathcal{P}$ is a finite set of
          propositional variables.
          As usual a \emph{computation} is a sequence of subsets of $\mathcal{P}$
          corresponding to some path of the transition system.
          For a given path ${\pi}$, we write $l( \pi ) $ for its
          corresponding computation.\par       An infinite computation can obviously be seen as an LTL interpretation
          (in the sense of Definition \ref{latex_lib_label_50}).
          Then an interpreted transition system $( T, l) $ is a \emph{model} of an LTL formula ${\phi}$ iff
          every infinite computation in $( T, l) $ is a model of ${\phi}$.
          \end{definition2}
  We now show on an example how we can model check a transition system against a formula \emph{using schemata}.
        We do not provide any formalisation since the example can easily be generalized.
        Consider the interpreted transition system $T$ represented on Figure \ref{latex_lib_label_22}.
  \begin{figure}[h]
    \begin{displaymath}
      \mbox{\input{extransitionsystem.tex}}
    \end{displaymath}
    \caption{A transition system $T$}\label{latex_lib_label_22}
  \end{figure}
  We can represent the behaviour of $T$ on all finite paths with a schema.\par       First we model the sole structure of the system, i.e. the \emph{uninterpreted} transition system.
        The indexed proposition ${\mathrm{state}}_{\mathsf{i}}^{1}$ (resp. ${\mathrm{state}}_{\mathsf{i}}^{2}$, ${\mathrm{state}}_{\mathsf{i}}^{3}$) means we are in state $1$ (resp. $2$, $3$) at time $\mathsf{i}$,
        and ${\mathrm{action}}_{\mathsf{i}}^{a}$ (resp. ${\mathrm{action}}_{\mathsf{i}}^{b}$) means that the action taken at time $\mathsf{i}$ is $a$ (resp. $b$):
        \begin{displaymath}
    \begin{array}{l}
      {\mathrm{state}}_{\mathsf{i}}^{1}\land {\mathrm{action}}_{\mathsf{i}}^{a}\Rightarrow {\mathrm{state}}_{\mathsf{i}+1}^{3}\\
      {\mathrm{state}}_{\mathsf{i}}^{1}\land {\mathrm{action}}_{\mathsf{i}}^{b}\Rightarrow {\mathrm{state}}_{\mathsf{i}+1}^{2}\\
      {\mathrm{state}}_{\mathsf{i}}^{2}\land {\mathrm{action}}_{\mathsf{i}}^{a}\Rightarrow {\mathrm{state}}_{\mathsf{i}+1}^{2}\\
      {\mathrm{state}}_{\mathsf{i}}^{2}\land {\mathrm{action}}_{\mathsf{i}}^{b}\Rightarrow {\mathrm{state}}_{\mathsf{i}+1}^{2}\\
      {\mathrm{state}}_{\mathsf{i}}^{3}\land {\mathrm{action}}_{\mathsf{i}}^{a}\Rightarrow {\mathrm{state}}_{\mathsf{i}+1}^{2}\\
      {\mathrm{state}}_{\mathsf{i}}^{3}\land {\mathrm{action}}_{\mathsf{i}}^{b}\Rightarrow {\mathrm{state}}_{\mathsf{i}+1}^{1}\\
    \end{array}
  \end{displaymath}
        Now the label of each state is easily modelled by the following schema:
        \begin{displaymath}
    \begin{array}{l}
      {\mathrm{state}}_{\mathsf{i}}^{1}\Rightarrow p_{\mathsf{i}}\land q_{\mathsf{i}}\land \neg r_{\mathsf{i}}\\
      {\mathrm{state}}_{\mathsf{i}}^{2}\Rightarrow \neg p_{\mathsf{i}}\land q_{\mathsf{i}}\land r_{\mathsf{i}}\\
      {\mathrm{state}}_{\mathsf{i}}^{3}\Rightarrow p_{\mathsf{i}}\land \neg q_{\mathsf{i}}\land r_{\mathsf{i}}\\
    \end{array}
  \end{displaymath}
        where $p_{\mathsf{i}}$ (resp. $q_{\mathsf{i}}$, $r_{\mathsf{i}}$) means that $p$ (resp. $q$, $r$) holds at time $\mathsf{i}$.
        Finally we also have to specify the fact that, at each instant $i$, there is
        one and only state active, and one and only one action can be taken
        \footnote{It is actually useless to ensure explicitly the unicity of actions since this is entailed by the unicity of states.}:
        \begin{displaymath}
    \begin{array}{l}
      {\mathrm{state}}_{\mathsf{i}}^{1}\Leftrightarrow \neg {\mathrm{state}}_{\mathsf{i}}^{2}\land \neg {\mathrm{state}}_{\mathsf{i}}^{3}\\
      {\mathrm{state}}_{\mathsf{i}}^{2}\Leftrightarrow \neg {\mathrm{state}}_{\mathsf{i}}^{1}\land \neg {\mathrm{state}}_{\mathsf{i}}^{3}\\
      {\mathrm{state}}_{\mathsf{i}}^{3}\Leftrightarrow \neg {\mathrm{state}}_{\mathsf{i}}^{1}\land \neg {\mathrm{state}}_{\mathsf{i}}^{2}\\
      {\mathrm{action}}_{\mathsf{i}}^{a}\Leftrightarrow \neg {\mathrm{action}}_{\mathsf{i}}^{b}\\
    \end{array}
  \end{displaymath}
        We write $s_{T}$ for the conjunction of all those schemata, all wrapped under a single $\bigwedge _{\mathsf{i}=0}^{\mathsf{n}}$.
        $s$ is not precisely an SPS since the upper bound of this iteration is $\mathsf{n}$ and not $\mathsf{n}-1$.
        But this is easily circumvented ($s$ is regular anyway).

  Now if we want to check this model against the formula $\mathrm{G}( p\lor q) $,
        we first translate this formula into a schema: $\lceil \mathrm{G}( p\lor q) \rceil =\bigwedge _{\mathsf{i}=0}^{\mathsf{n}}( p_{\mathsf{i}}\lor q_{\mathsf{i}}) $.
        If the transition system is indeed a model of $\mathrm{G}( p\lor q) $, then $s\Rightarrow \bigwedge _{\mathsf{i}=0}^{\mathsf{n}}( p_{\mathsf{i}}\lor q_{\mathsf{i}}) $ must be valid
        (which intuitively means that for every $n\in \mathbb{N}$ and every path of
        length $n$, the property $p\lor q$ holds all along the path).
        Equivalently, it is a model iff $s\land \bigvee _{\mathsf{i}=0}^{\mathsf{n}}( \neg p_{\mathsf{i}}\land \neg q_{\mathsf{i}}) $ is unsatisfiable
        (which means that there exists $n\in \mathbb{N}$ and a path of length $n$ s.t. the property
        $p\lor q$ does not hold at one state of the path).
        We can thus use any regular schema SAT-solver (like Reg{\-}\textsc{Stab}) to check if
        this formula is satisfiable or not.
        \section{Conclusion and future work}\label{latex_lib_label_67}LTL formulae and the so-called sequential propositional schemata have been shown to be reducible to each other in polynomial time 
      (exponential time when numbers are encoded in binary).
      This entails that the satisfiability of SPS is PSPACE-complete.
      Both those results are new.
      The reduction of SPS to LTL is not so surprising, and the converse reduction makes use
      of the well-known fact that the infinite semantics of LTL can be finitely represented.
      This remark illustrates one of the two major differences between LTL and schemata:
      whereas the semantics of LTL are infinite, those of schemata are finite.
      The other difference is that schemata allow to refer to a time in the future in a symbolic way (using the parameter $\mathsf{n}$)
      and to use arithmetic operations to construct time expressions.
      If these operations are sufficiently simple, they can be encoded in LTL formulae as shown in Section \ref{latex_lib_label_27}.
      On the other hand, LTL allows for a much handier way to deal with time in a purely \emph{local} way.\par     \emph{Future work.}
      Using the above translations to help export procedures from one logic to another is an obvious follow-up of this work
      (in particular, \textsc{Dpll} inspired procedures for schemata could help defining such a procedure for LTL).
      Similarly, as explained in Remark \ref{latex_lib_label_78}, investigating how model checking is done by translation to schemata
      could give ideas to define new completeness criteria for bounded model checking.
      The extension of the presented results to other classes of schemata could also be considered,
      e.g. schemata with nested iterations (proved decidable in \cite{nested,jair2011}).
      Translation algorithms from nested schemata into sequential ones exist \cite{jair2011}, however they are of double exponential complexity.
      Thus we conjecture that no polynomial-time transformation from nested schemata to LTL exists.  
      The extension of this study to other -- more expressive -- temporal logics could also be of interest.
      Notably, LTL with past operators \cite{glory-of-the-past} seems to be easily handled 
      with (non sequential) schemata simply by allowing negative numbers in indices.
      Since implementations for this logic do not have the same support as standard LTL and are generally not as efficient,
      such a reduction could help in improving those points.
      One could go even further by making connections between schemata and monadic second order logic (MSO).
      This would be interesting both in theory and practice, since few implementations are available for MSO 
      (only MONA \cite{mona} seems to be actively maintained).
      \par     \par     
      \bibliographystyle{amsalpha}\bibliography{ltl}

\newcommand{\etalchar}[1]{$^{#1}$}
\providecommand{\bysame}{\leavevmode\hbox to3em{\hrulefill}\thinspace}
\providecommand{\MR}{\relax\ifhmode\unskip\space\fi MR }
\providecommand{\MRhref}[2]{%
  \href{http://www.ams.org/mathscinet-getitem?mr=#1}{#2}
}
\providecommand{\href}[2]{#2}
\begin{thebibliography}{DWDMR08}

\bibitem[ACP09]{tab09}
Vincent Aravantinos, Ricardo Caferra, and Nicolas Peltier, \emph{{{A}}
  {{S}}chemata {{C}}alculus for {{P}}ropositional {{L}}ogic},
  {T}{A}{B}{L}{E}{A}{U}{X}, vol. 5607, {S}pringer, 2009, pp.~32--46.

\bibitem[ACP10a]{nested}
\bysame, \emph{{{A}} {{D}}ecidable {{C}}lass of {{N}}ested {{I}}terated
  {{S}}chemata}, in {G}iesl and {H}ähnle \cite{conf/ijcar/2010}, pp.~293--308.

\bibitem[ACP10b]{lata2010}
\bysame, \emph{{{C}}omplexity of the {{S}}atisfiability {{P}}roblem for a
  {{C}}lass of {{P}}ropositional {{S}}chemata}, {L}anguage and {A}utomata
  {T}heory and {A}pplications ({A}drian-{H}oria {D}ediu, {H}enning {F}ernau,
  and {C}arlos~{M}artín {V}ide, eds.), vol. 6031, {S}pringer, {H}eidelberg,
  2010, pp.~58--69.

\bibitem[ACP10c]{regstab}
\bysame, \emph{{{R}}eg{{S}}{{T}}{{A}}{{B}}: {{A}} {{S}}{{A}}{{T}}-{{S}}olver
  for {{P}}ropositional {{I}}terated {{S}}chemata}, in {G}iesl and {H}ähnle
  \cite{conf/ijcar/2010}, pp.~309--315.

\bibitem[ACP11]{jair2011}
\bysame, \emph{{D}ecidability and {U}ndecidability {R}esults for
  {P}ropositional {S}chemata}, Journal of Artificial Intelligence Research
  \textbf{40} (2011), 599--656.

\bibitem[AMEP10]{schema-resolution}
Vincent Aravantinos, {M}nacho {E}chenim, and Nicolas Peltier, \emph{{A}
  {R}esolution {C}alculus for {P}ropositional {S}chemata}, Tech. report, 2010,
  {A}vailable at
  {\small{\url{http://membres-lig.imag.fr/peltier/rep-AEP11.pdf}}}.

\bibitem[BCC{\etalchar{+}}03]{bmc}
Armin Biere, Alessandro Cimatti, Edmund~M. Clarke, Ofer Strichman, and Yunshan
  Zhu, \emph{Bounded model checking}, Advances in Computers \textbf{58} (2003),
  118--149.

\bibitem[BH10]{finite-ltl-review}
{A}ndreas {B}auer and {P}atrik {H}aslum, \emph{{{L}{T}{L}} {G}oal
  {S}pecifications {R}evisited}, ECAI ({A}msterdam), {I}{O}{S} {P}ress, Aug
  2010, pp.~881--886.

\bibitem[BHS98]{logic-workbench}
Peter Balsiger, Alain Heuerding, and Stefan Schwendimann, \emph{Logics
  {W}orkbench 1.0}, TABLEAUX (Harrie C.~M. de~Swart, ed.), vol. 1397, Springer,
  1998, pp.~35--37.

\bibitem[BK95]{finite-ltl-original}
{F}ahiem {B}acchus and {F}roduald {K}abanza, \emph{{U}sing {T}emporal {L}ogic
  to {C}ontrol {S}earch in a {F}orward {C}haining {P}lanner}, 3rd {E}uropean
  {W}orkshop on {P}lanning, {P}ress, 1995, pp.~141--153.

\bibitem[BM06]{ltl-finite-traces}
{J}orge~{A}. {B}aier and {S}heila~{A}. {M}cilraith, \emph{{P}lanning with
  first-order temporally extended goals using heuristic search}, National
  Conference on Artificial Intelligence, {A}{A}{A}{I} {P}ress, 2006,
  pp.~788--795.

\bibitem[{B}yl91]{planning-complexity}
{T}om {B}ylander, \emph{{C}omplexity results for planning}, {P}roceedings of
  the 12th international joint conference on {A}rtificial intelligence -
  {V}olume 1 ({S}an {F}rancisco, {C}{A}, {U}{S}{A}), {M}organ {K}aufmann
  {P}ublishers {I}nc., 1991, pp.~274--279.

\bibitem[CCG{\etalchar{+}}02]{nusmv}
Alessandro Cimatti, Edmund~M. Clarke, Enrico Giunchiglia, Fausto Giunchiglia,
  Marco Pistore, Marco Roveri, Roberto Sebastiani, and Armando Tacchella,
  \emph{{N}u{S}{M}{V} 2: {A}n {O}pen{S}ource {T}ool for {S}ymbolic {M}odel
  {C}hecking}, CAV (Ed~Brinksma and Kim~Guldstrand Larsen, eds.), vol. 2404,
  Springer, 2002, pp.~359--364.

\bibitem[CNP94]{nivat}
{H}ugues {C}albrix, {M}aurice {N}ivat, and {A}ndreas {P}odelski,
  \emph{{U}ltimately {P}eriodic {W}ords of {R}ational $\omega$-{L}anguages},
  {M}{F}{P}{S} 1994 ({L}ondon, {U}{K}), {S}pringer-{V}erlag, 1994,
  pp.~554--566.

\bibitem[DP60]{dpll}
Martin Davis and Hilary Putnam, \emph{{A} {C}omputing {P}rocedure for
  {Q}uantification {T}heory}, J. ACM \textbf{7} (1960), 201--215.

\bibitem[DWDMR08]{ltl-antichains}
M.~De~Wulf, L.~Doyen, N.~Maquet, and J.~F. Raskin, \emph{{A}ntichains:
  alternative algorithms for {L}{T}{L} satisfiability and model-checking},
  TACAS'08/ETAPS'08 (Berlin, Heidelberg), Springer-Verlag, 2008, pp.~63--77.

\bibitem[EFH{\etalchar{+}}03]{truncated-ltl}
Cindy Eisner, Dana Fisman, John Havlicek, Yoad Lustig, Anthony McIsaac, and
  David~Van Campenhout, \emph{{R}easoning with {T}emporal {L}ogic on
  {T}runcated {P}aths}, CAV (Warren A.~Hunt Jr. and Fabio Somenzi, eds.), vol.
  2725, Springer, 2003, pp.~27--39.

\bibitem[FDP01]{ltl-resolution}
Michael Fisher, Clare Dixon, and Martin Peim, \emph{Clausal temporal
  resolution}, ACM Trans. Comput. Logic \textbf{2} (2001), 12--56.

\bibitem[GH10]{conf/ijcar/2010}
{J}ürgen {G}iesl and {R}einer {H}ähnle (eds.), \emph{Ijcar}, vol. 6173,
  {S}pringer, 2010.

\bibitem[GHLS05]{lotrec}
Olivier Gasquet, Andreas Herzig, Dominique Longin, and Mohamad Sahade,
  \emph{Lo{TREC}: {L}ogical {T}ableaux {R}esearch {E}ngineering {C}ompanion},
  TABLEAUX (Bernhard Beckert, ed.), vol. 3702, Springer Berlin / Heidelberg,
  2005, pp.~318--322.

\bibitem[GPSS80]{temporal-completeness}
{D}ov {G}abbay, {A}mir {P}nueli, {S}aharon {S}helah, and {J}onathan {S}tavi,
  \emph{{O}n the temporal analysis of fairness}, POPL ({N}ew {Y}ork, {N}{Y},
  {U}{S}{A}), {A}{C}{M}, 1980, pp.~163--173.

\bibitem[HJJ{\etalchar{+}}95]{mona}
J.G. Henriksen, J.~Jensen, M.~J{\o}rgensen, N.~Klarlund, B.~Paige, T.~Rauhe,
  and A.~Sandholm, \emph{{M}ona: {M}onadic {S}econd-order logic in practice},
  TACAS '95, LNCS 1019, 1995.

\bibitem[HK03]{trp}
Ullrich Hustadt and Boris Konev, \emph{{T}{R}{P}++2.0: {A} {T}emporal
  {R}esolution {P}rover}, CADE (Franz Baader, ed.), vol. 2741, Springer, 2003,
  pp.~274--278.

\bibitem[LPZ85]{glory-of-the-past}
{O}rna {L}ichtenstein, {A}mir {P}nueli, and {L}enore~{D}. {Z}uck, \emph{{T}he
  {G}lory of the {P}ast}, CLP ({L}ondon, {U}{K}), {S}pringer-{V}erlag, 1985,
  pp.~196--218.

\bibitem[{P}nu77]{ltl}
{A}mir {P}nueli, \emph{{T}he temporal logic of programs}, {P}roceedings of
  {F}{O}{C}{S} 1977 ({W}ashington, {D}{C}, {U}{S}{A}), {I}{E}{E}{E} {C}omputer
  {S}ociety, 1977, pp.~46--57.

\bibitem[RV07]{ltl-reduction-experiments}
{K}ristin~{Y}. {R}ozier and {M}oshe~{Y}. {V}ardi, \emph{{{L}{T}{L}}
  satisfiability checking}, {P}roceedings of the 14th international
  {S}{P}{I}{N} conference on {M}odel checking software ({B}erlin,
  {H}eidelberg), {S}pringer-{V}erlag, 2007, pp.~149--167.

\bibitem[SC85]{ltl-complexity}
{A}.~{P}. {S}istla and {E}.~{M}. {C}larke, \emph{{T}he complexity of
  propositional linear temporal logics}, {J}ournal of the {A}{C}{M} \textbf{32}
  (1985), no.~3, 733--749.

\bibitem[{S}ch98]{one-pass}
{S}tefan {S}chwendimann, \emph{{A} {N}ew {O}ne-{P}ass {T}ableau {C}alculus for
  {P}{L}{T}{L}}, {T}{A}{B}{L}{E}{A}{U}{X} ({H}arrie de~{S}wart, ed.), vol.
  1397, {S}pringer {B}erlin / {H}eidelberg, 1998, pp.~277--291.

\bibitem[SSS00]{inductive-bmc}
Mary Sheeran, Satnam Singh, and Gunnar St{\aa}lmarck, \emph{Checking safety
  properties using induction and a sat-solver}, FMCAD '00, Springer-Verlag,
  2000, pp.~108--125.

\bibitem[{T}ho79]{star-free}
{W}olfgang {T}homas, \emph{{S}tar-free regular sets of $\omega$-sequences},
  {I}nformation and {C}ontrol \textbf{42} (1979), no.~2, 148 -- 156.

\bibitem[VG09]{goranko}
{V}alentin {G}oranko, \emph{{T}emporal {L}ogics for {S}pecification and
  {V}erification}, {P}roceedings of the {E}uropean {S}ummer {S}chool in
  {L}ogic, {L}anguage and {I}nformation ({E}{S}{S}{L}{I}'09), 2009.

\bibitem[{W}ol85]{wolper}
{P}ierre {W}olper, \emph{{T}he tableau method for temporal logic: an overview},
  {L}ogique et {A}nalyse \textbf{28} (1985), 119--136.

\bibitem[WVS83]{ltl-buchi}
Pierre Wolper, Moshe~Y. Vardi, and A.~Prasad Sistla, \emph{Reasoning about
  infinite computation paths}, Foundations of Computer Science, Annual IEEE
  Symposium on \textbf{0} (1983), 185--194.

\end{thebibliography}
\end{document}